\renewcommand{\epsilon}{\varepsilon}
\newcommand{\C}{\mathbb C}
\newcommand{\Z}{\mathbb Z}
\newcommand{\op}{\operatorname}
\newcommand{\mbb}{\mathbb}
\newcommand{\ip}[1][\cdot,\cdot]{\left\langle #1 \right\rangle}
\newcommand{\R}{\mbb R}
\renewcommand{\Im}{\op{Im}}
\newcommand{\defretract}[5]{
\begin{tikzcd}[ampersand replacement=\&]
#1\arrow[r,shift left = .5 ex,"#3"] \&  #2\arrow[l, shift left = .5 ex,"#4"]\arrow[loop right, distance = 4em,start anchor = {[yshift = 1ex]east},end anchor = {[yshift=-1ex]east}]{}{#5}
\end{tikzcd}
}
\DeclareMathOperator{\Aut}{Aut} 
\DeclareMathOperator{\Sym}{Sym} \DeclareMathOperator{\Hom}{Hom}
\DeclareMathOperator{\coker}{coker}
\def\sO{\mathscr O}
\def\sS{\mathscr S}
\def\sV{\mathscr V}
\def\fJ(E){\mathfrak E}
\def\fJ{\mathfrak J}
\def\fg{\mathfrak g}
\declaretheoremstyle[
spaceabove=7pt, spacebelow=7pt,
headfont=\normalfont\bfseries,
notefont=\mdseries, notebraces={(}{)},
bodyfont=\itshape,
postheadspace=5pt,
headpunct = .
]{thm}
\declaretheoremstyle[
spaceabove=7pt, spacebelow=7pt,
headfont=\normalfont\bfseries,
notefont=\mdseries, notebraces={(}{)},
bodyfont=\itshape,
postheadspace=10pt,
headpunct = .
]{def}
\declaretheoremstyle[
spaceabove=4pt, spacebelow=7pt,
headfont=\itshape,
postheadspace=5pt,
headpunct = :,
postheadspace = 3pt, qed = $\lozenge$
]{rem}
\declaretheorem[numbered = yes, parent = section, style = thm]{theorem}
\declaretheorem[numbered = no, style = thm, name = Corollary A]{corA}
\declaretheorem[sibling = theorem, style = thm, name = Theorem/Definition]{thm-def}
\declaretheorem[sibling = theorem, style = thm]{proposition}
\declaretheorem[sibling = theorem, style = thm]{lemma}
\numberwithin{equation}{section}
\declaretheorem[sibling = theorem, style = def]{definition}
\declaretheorem[sibling = theorem, style = rem]{remark}
\newcommand{\cinfty}{C^{\infty}}
\newcommand\Obcl[1][~]{\ifthenelse{ \equal{#1}{~}} {
  \operatorname{Obs}^{cl}
}{
  \operatorname{Obs}^{cl}(#1)
}}
\newcommand\Obq[1][~]{\ifthenelse{ \equal{#1}{~}} {
  \operatorname{Obs}^{q}
}{
  \operatorname{Obs}^{q}(#1)
}}
\newcommand\alg{\sO(V)}
\newcommand{\shObq}{\mathscr{O}bs^q}
\newcommand{\shObcl}{\mathscr{O}bs^{cl}}
\newcommand{\Pflm}{\det{( D_{\lambda}^\mu)^+}}
\DeclareMathOperator{\pf}{pf}
\DeclareMathOperator{\rk}{rk}
\DeclareMathOperator{\Det}{Det}
\title[The BV Formalism and the Determinant Line]{The Batalin-Vilkovisky Formalism and the Determinant Line Bundle}
\author{Eugene Rabinovich}
\address{Department of Mathematics \\ University of California, Berkeley}
\email{erabin@math.berkeley.edu}
\date{}
\begin{document}
\maketitle
\begin{abstract}
Given a smooth family of massless free fermions parametrized by a base manifold $B$, we show that the (mathematically rigorous) Batalin-Vilkovisky quantization of the observables of this family gives rise to the determinant line bundle for the corresponding family of Dirac operators.
\end{abstract}
\tableofcontents
\section{Introduction}
In this article, we revisit the classical construction of the determinant line bundle of a family of Dirac operators \cite{quillen1, bf1, bf2} and show that it arises from the study of the mathematically rigorous theory of the massless free fermion \emph{in the Batalin-Vilkovisky formalism}. Recall that for a family of Dirac operators $D$ parametrized by a manifold $B$, one can produce a line bundle $L$ over $B$, known as the determinant line; here, we show that the BV formalism produces an infinite-rank graded vector bundle over $B$ whose sheaf of sections forms a complex of sheaves quasi-isomorphic to the sheaf of sections of $L$.

Let us explain the construction of $L$ in a bit more detail. The fiber of $L$ over the point $b\in B$ is
\[
(\Det D)_b: = \Lambda{}^{top} \ker D^+_b \otimes \Lambda{}^{top} (\coker D^-_b)^\vee,
\] 
where $D^+_b$ is the component of the Dirac operator at $b$ which acts on spinors of positive chirality, and the superscript $\vee$ denotes the dual vector space. We are assuming that $D_b$ acts on the space of sections of a vector bundle over a compact manifold $M_b$, so that $\ker D^+_b$ and $\coker D^+_b$ are finite-dimensional. \textit{A priori}, the lines $(\Det D)_b$ do not fit together into a smooth bundle over $B$ because the dimension of the kernel and cokernel of $D_b$ may jump as $b$ varies over $B$. However, a key insight of Quillen was that--for a family of $\bar\partial$ operators (i.e. holomorphic connections)--the lines $(\Det D)_b$ do in fact fit together into a smooth line bundle $\Det D$ over $B$. Moreover, let $B_0$ denote the subspace of $B$ consisting of all points $b$ such that $D_b$ has index zero, and $i: B_0\hookrightarrow B$ the inclusion. Then, one can construct a canonical section $\det D^+$ of $i^* \Det D$ as follows. If $D_b$ has a non-trivial kernel, then $(\det D^+)_b = 0$. Otherwise, since $D_b$ has trivial kernel and index zero, $(\Det D)_b$ is canonically identified with $\C$, so we set $(\det D^+)_b=1$ in such a case. Quillen showed that this is a smooth section of the line bundle $\Det D$.

This construction has a natural interpretation in terms of the quantum field theory of the free fermion. A Dirac operator defines a massless free fermion theory, and the path integrals for this theory are given heuristically by infinite-dimensional Berezin integrals. In particular, the value of the path integral computing the expectation value of the observable $1$ in the theory defined by $D_b$ should be given by $(\det D^+)_b$. This provides a field-theoretic interpretation for the section $(\det D^+)_b$ described in the previous paragraph. We note that, in the physics literature, the section $\det D^+$ is called the \textit{partition function}, but the terminology is a bit misleading since $\det D^+$ can be interpreted as an honest function only when $\Det D$ is provided with an explicit trivialization.

The path integral remains a heuristically-defined object; however, one approach to making sense of this object is the Batalin-Vilkovisky (BV) formalism, which replaces path integral computations with homological algebra. In the present work, we use the BV formalism as described in \cite{cost}, \cite{CG1}, and \cite{CG2}. The input to the BV formalism is a field theory and the output is a cochain complex of quantum observables of the theory. It is a feature of the BV formalism that the homological-algebraic context for the BV formalism can be quite general. One may construct cochain complexes not just over a field $k$ but over a suitable commutative ring, for example over the algebra of Chevalley-Eilenberg cochains of a Lie algebra $\fg$. In the present case, the homological-algebraic context is that of sheaves of $\cinfty_B$-modules. In particular, the quantum observables will be a(n infinite-rank) graded vector bundle $\Obq$ over $B$, whose sheaf of sections we will denote $\shObq$. $\Obq$ is endowed with a fiberwise differential, so that $\shObq$ is a complex of sheaves of $\cinfty_B$ modules. The main result of this note is the following theorem:

\begin{theorem}
\label{thm: main}
There is a quasi-isomorphism of sheaves of $\cinfty_B$-modules
\begin{equation*}
\Phi: \shObq \to \Gamma(\cdot, \Det D),
\end{equation*}
where $\Gamma(\cdot, \Det D)$ is the sheaf of sections of the line bundle $\Det D\to B$. Moreover, $\cinfty_B\subset \shObq$ and $\Phi(1) = \det D^+$.
\end{theorem}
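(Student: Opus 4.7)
\medskip

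The plan is to identify $\shObq$ with a concrete BV cochain complex built from the family of Dirac operators and then compare its cohomology to the fibers of $\Det D$ using the same spectral-cutoff trick that underlies Quillen's construction. Since the theory is free, the general machinery of \cite{CG2} should express $\shObq$ as (the $\cinfty_B$-linear completion of) a symmetric/exterior algebra on the shifted dual of fields, equipped with a BV differential of the form $\hbar \Delta + d_D$, where $\Delta$ is the BV Laplacian associated to the $L^2$ pairing on spinors and $d_D$ is induced by $D$ acting on the appropriate copy of fields. Since there are no interactions, the higher quantum corrections vanish, and one reduces essentially to a Koszul-type complex for the family $D$.

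The next step is to show fiberwise that the cohomology of this complex at $b\in B$ is concentrated in a single degree and canonically isomorphic to $(\Det D)_b$. Concretely, I would first treat the case where $D_b$ is invertible: then $D_b$ contracts the Koszul complex, the cohomology is a single copy of $\CC$, and this copy is canonically $\Lambda^{top}\ker D_b^+ \otimes \Lambda^{top}(\coker D_b^-)^\vee \cong \CC$. In the general case, I would use a finite-dimensional splitting of the spinor fields into a ``small'' subspace containing $\ker D_b$ and $\ker D_b^\ast$ and its orthogonal complement. The Koszul complex splits as a tensor product over these two factors; on the orthogonal complement the operator is invertible and contributes a copy of $\CC$, while on the finite-dimensional piece one recognizes the Koszul resolution of $\CC$ for a trivially-acting operator, whose top cohomology is exactly $\Lambda^{top}\ker D_b^+\otimes\Lambda^{top}(\coker D_b^-)^\vee$.

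To upgrade this fiberwise computation to a quasi-isomorphism of sheaves $\Phi:\shObq \to \Gamma(\cdot,\Det D)$, I would work locally on $B$. For any $b_0\in B$, choose a spectral cutoff $\lambda>0$ not in the spectrum of $D_{b_0}^\ast D_{b_0}$; then in a neighborhood $U$ of $b_0$ the spectral projection $P_{<\lambda}$ onto the direct sum of eigenspaces of $D_b^\ast D_b$ with eigenvalue less than $\lambda$ depends smoothly on $b\in U$ and has finite, locally constant rank. This decomposes the complex of observables over $U$ into a tensor product of a finite-rank ``small-eigenvalue'' subcomplex and a ``large-eigenvalue'' complex on which $D$ is invertible; the second factor is contractible, while the first is a finite-rank Koszul complex whose top cohomology is the standard local model for $\Det D|_U$ in \cite{quillen1,bf1,bf2}. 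A quasi-isomorphism to $\Gamma(U,\Det D)$ is then essentially the Berezinian integration/projection to the top component, and the only thing left to check is independence of the cutoff $\lambda$ up to canonical isomorphism, which follows by comparing two cutoffs via a third dominating both.

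Finally, to verify $\Phi(1) = \det D^+$, note that by construction the constant observable $1\in\shObq$ corresponds, under the spectral decomposition, to the element $1$ in the degree-zero piece of the small-eigenvalue Koszul complex; its image in the top cohomology agrees, under Quillen's local trivialization, with the canonical element $1\in\CC \cong (\Det D)_b$ when $\ker D_b = 0$ and with $0$ when $\ker D_b\neq 0$, which is precisely Quillen's definition of $\det D^+$. I expect the main obstacle to lie not in the fiberwise cohomology computation, which is clean, but rather in showing that the spectral-cutoff quasi-isomorphisms genuinely patch to a map of sheaves of $\cinfty_B$-modules: this requires careful control of the functional-analytic completions used in defining $\shObq$ and compatibility of the spectral decomposition with the BV differential and the Koszul filtration as $b$ varies, which is where the mathematical content of \cite{CG2} meets Quillen's local trivialization.
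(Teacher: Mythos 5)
Your proposal follows essentially the same route as the paper: reduce to a finite-rank model over the spectral-cutoff opens $U_\lambda$ (the paper does this via a deformation retraction of $\shObcl$ onto $\Sym(\Pi(\tilde K_\lambda\oplus\tilde K_\lambda^\vee[1]))$ built from Green's operators, then perturbs by the BV Laplacian using the homological perturbation lemma), identify the cohomology of the finite-rank quantum complex with $\Lambda^{top}$ via Berezin integration against $e^{\tilde D_\lambda}$, and check compatibility across cutoffs. One step is under-specified in a way worth flagging: ``independence of the cutoff up to canonical isomorphism, by comparing via a third dominating cutoff'' is not quite what has to be proved. The line bundle $\Det D$ is glued over $U_\lambda\cap U_\mu$ by the nontrivial transition function $id\otimes\det(D^+_{(\lambda,\mu)})$ (with a sign), so what must be shown is that $\pi''_\mu\circ\pi'_\mu$ equals this transition function composed with $\pi''_\lambda\circ\pi'_\lambda$ --- no third cutoff is involved, and the multiplicativity of the Pfaffian/determinant under the orthogonal splitting $\tilde K_\mu=\tilde K_\lambda\oplus\tilde K_{(\lambda,\mu)}$ is the substance of the comparison. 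Moreover, the relevant diagram in the paper commutes only up to the homotopy $\eta^{'\lambda}_\mu$, and one needs a separate degree argument (the homotopy lands in strictly negative degrees, so its composite with the projection to $\Lambda^{top}$ vanishes) to conclude the identity on the nose rather than up to homotopy. Also, a minor imprecision: the cohomology of the finite-dimensional piece on which $D$ acts trivially is the top exterior power only because of the BV Laplacian $\Delta$; the classical (Koszul) complex there has zero differential and its cohomology is the full symmetric algebra, so the collapse to $\Lambda^{top}$ is a genuinely quantum phenomenon, consistent with your later appeal to Berezin integration but not with the phrase ``Koszul resolution of $\CC$.''
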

In other words, we show that the BV formalism provides a mathematically precise justification for the physical origin of the determinant line bundle. Families of free fermions in the BV formalism have already been studied in \cite{rabaxial} in the case that the space $B$ is $B\mathfrak g$, where $\fg$ is a dgla; the aim of the present work is to adapt the methods of that paper to the context studied by Bismut, Freed, and Quillen. In other words, we seek to turn the physical inspiration for the constructions of \cite{bf1}, \cite{bf2}, \cite{quillen1} into precise mathematics. We will see that the suite of homological-algebraic techniques from \cite{rabaxial} suffice for the proof of Theorem \ref{thm: main}.
 
There are a number of further interpretations of Theorem \ref{thm:main}. Physically, $\det D^+$ is interpreted as the partition function of the theory. Theorem \ref{thm: main} states that $\Phi(1)$ is the partition function. More generally, one should understand the map $\Phi$ as computing (unnormalized) expectation values of observables---we comment on this in more detail below in Remark \ref{rem: vevs}. Moreover, using the techniques of \cite{CG1}, one can show that each fiber $\Obq_b$ is the space of global sections of a local-to-global object known as a \textit{factorization algebra}. Our Theorem therefore manifests the determinant line bundle as the global  sections of a local-to-global object. See Remark \ref{rem: factdet} for more details.
\subsection{Future Directions}
In \cite{quillen1}, Quillen also constructed a natural metric and compatible connection on $\Det D$ that could be used to study its characteristic classes via Chern-Weil theory. If the curvature and holonomy of this connection vanish, then a horizontal section gives a canonical (up to overall scale) trivialization of $\Det D$ which can be used to turn $\det D^+$ into an honest function on $B_0$. Bismut and Freed generalized these results to a wider class of elliptic operators. It would be interesting to know whether these objects can be induced from analogous structures on $\Obq$. 

\subsection{Organization of the paper}
 In section \ref{sec: fin-dim}, we study the finite-dimensional massless free fermion, first in the case $B=pt$, and then in the general case. We explore the relationship of the massless free fermion to the Pfaffian of a skew-symmetric bilinear form. In section \ref{sec: detbund}, we review the determinant bundle construction. The tools we discuss will enable us to, in Section \ref{sec: inf-dim}, prove Theorem \ref{thm: main}. The main homological algebraic tool throughout will be the homological perturbation lemma and related results, which we assemble in Appendix \ref{sec: app}.
\subsection{Related Work}
We are aware of related work \cite{mnev, schwarz}. The field theory we consider is a special case of the one introduced in \cite{schwarz}; however, our work differs from both of these references in that its focus is on \emph{families} of Dirac operators and the resulting bundles of \textit{observables}. By contrast, in the references \cite{mnev, schwarz}, the authors study the \emph{states} (and in particular the partition function) \emph{of a single theory}.

\subsection{Acknowledgements}
The author would like to thank O. Gwilliam, M. Ludewig, and P. Teichner for helpful discussions. He would also like to thank O. Gwilliam and B. Albert for helpful comments on an earlier draft of the note.

\subsection{Conventions}
\begin{itemize}
\item Throughout, the ground field is $\C$.
\item We use the notation $\cong$ for isomorphisms and $\simeq$ for quasi-isomorphisms (weak equivalences). 
\item We use the notation $\cinfty_M$ and $\Omega^k_M$ for the sheaves of smooth functions and smooth $k$-forms, respectively, on $M$. We use the notation $\underline{\R}$, $\Lambda^k T^*M$ for the corresponding vector bundles.
\item If $V$ is a finite-dimensional vector space, $V^\vee$ is the linear dual to $V$; if $V$ is a topological vector space, then $V^\vee$ is the continuous linear dual. If $V\to M$ is a finite-rank vector bundle, then $V^\vee \to M$ is the fiberwise dual to $V$. 
\item If $V_1\to M_1$ and $V_2\to M_2$ are vector bundles and $M_3$ is a manifold, then we denote by $V_1\boxtimes V_2$ the bundle $p_1^*(V_1)\otimes p_1^*(V_2)\to M_1\times M_2\times M_3$, where $p_1: M_1\times M_2\times M_3 \to M_1$ and $p_2: M_1\times M_2\times M_3\to M_2$ are the canonical projections. If $M_3$ is not explicitly mentioned, it is assumed to be $pt$.
\item If $V$ is a $\Z\times \Z/2$-graded vector space and $v$ is a homogeneous element of $v$, then $|v|$ and $\pi_v$ refer to the $\Z$ and $\Z/2$-degrees of $v$, respectively. Our sign convention is that moving a $v$ past $w$ incurs a sign of 
\[
(-1)^{|v||w|+\pi_v\pi_w}.
\]
\item If $V$ is a $\Z$-graded vector space, then $V[k]$ is the $\Z$-graded vector space whose $i$-th homogeneous space is $V_{i+k}$. In other words, $V[k]$ is $V$ shifted \textit{down} by $k$ ``slots''. If $V$ is a vector space, one can view it as a $\Z$-graded vector space concentrated in degree 0, and define $V[k]$ similarly.
\item If $U$ is a manifold, we will use the term ``sheaf of $\cinfty$ modules'' interchangeably with the term ``module for the sheaf of rings $\cinfty_U$''.
\item If $B$ is a manifold and $V\to B$ is a bundle over $B$, we denote by $V_b$ the fiber of $V$ over $b\in B$. 
\end{itemize}

\section{Finite-dimensional fermionic theories}
\label{sec: fin-dim}
In this section, we outline the relationship of the finite-dimensional massless free fermion to the Pfaffian of a skew-symmetric bilinear form $A$ on a vector space $W$. Aside from introducing many of the tools that we will use in proving Theorem \ref{thm: main}, this section serves as justification for the idea that the partition function of fermionic is computed by a Pfaffian. To the best of the author's knowledge, the exposition of this section---as well as the articulation and proof of Lemma \ref{lem: pfaffiso}---are original, though the connection between 0-dimensional fermionic theories and Pfaffians is well-known (see, e.g., Chapter 1.7 of \cite{zinnjustin}).

\subsection{The Pfaffian of a skew-symmetric bilinear form}
As in the preceding paragraph, let $A$ be a skew-symmetric bilinear form on a finite-dimensional vector space $W$ over  $\C$. Throughout, we use $A$ to denote both the corresponding element of $\Lambda^2 W^\vee$ and the corresponding skew-self-adjoint linear map $W\to W^\vee$. 
\begin{definition}
Given a skew-symmetric bilinear form $A$ on $W$, the \textbf{Pfaffian} of $A$, denoted $\pf(A)$, is the image of $e^A \in \Lambda^* W^\vee$ under the projection
\[
\pi_{top}:\Lambda^* W^\vee \to \Lambda^{top} W^\vee.
\]
We adopt the convention that if $W=0$, $\Lambda^{top}W^\vee:=\C$.
\end{definition}

From this definition, a number of properties are immediate. 
\begin{lemma}
Let $A$, $A'$ be skew-symmetric bilinear forms on $W$, $W'$ respectively. Then,
\begin{enumerate}
\item If $A$ is degenerate (i.e. if, considered as a map $W\to W^\vee$, $A$ is not invertible), then $\pf(A)=0$. 
\item Under the isomorphism $\Lambda^{top}((W\oplus W')^\vee)\cong \Lambda^{top}(W^\vee)\otimes \Lambda^{top}((W')^\vee))$, $\pf(A\oplus A')$ corresponds to $\pf(A)\otimes \pf(A')$.
\end{enumerate}
\end{lemma}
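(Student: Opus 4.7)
The plan is to compute $\pf(A)$ and $\pf(A \oplus A')$ directly from the definition, exploiting the algebraic compatibility between the exponential map, the wedge product, and the isomorphism $\Lambda^{\ast}((W \oplus W')^\vee) \cong \Lambda^\ast W^\vee \otimes \Lambda^\ast (W')^\vee$.

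For part (1), I would begin by separating the parity of $\dim W$. If $\dim W$ is odd, then $\Lambda^{top} W^\vee$ lives in odd exterior degree, while $e^A = \sum_{k \geq 0} A^k/k!$ is concentrated in even exterior degree (since $A \in \Lambda^2 W^\vee$), so $\pi_{top}(e^A)$ vanishes automatically---and one notes that skew forms on odd-dimensional spaces are always degenerate, so this case is consistent with the claim. If $\dim W = 2n$ is even, pick a nonzero $v \in \ker(A \colon W \to W^\vee)$ and extend to a basis $v = e_1, e_2, \ldots, e_{2n}$ with dual basis $e_1^\vee, \ldots, e_{2n}^\vee$. The hypothesis $A(v, \cdot) = 0$ means that when $A$ is expanded as $\sum_{i < j} A_{ij}\, e_i^\vee \wedge e_j^\vee$, all coefficients $A_{1j}$ vanish. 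Thus $A \in \Lambda^2 \operatorname{span}(e_2^\vee, \ldots, e_{2n}^\vee)$, and so $A^n$---which is the only contribution of $e^A$ to $\Lambda^{top}W^\vee$---also lies in $\Lambda^\ast \operatorname{span}(e_2^\vee, \ldots, e_{2n}^\vee)$. This space meets $\Lambda^{top} W^\vee = \C \cdot (e_1^\vee \wedge \cdots \wedge e_{2n}^\vee)$ trivially, so $\pi_{top}(e^A) = 0$.

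For part (2), the key observation is that under the canonical isomorphism
\[
\Lambda^\ast((W \oplus W')^\vee) \cong \Lambda^\ast W^\vee \otimes \Lambda^\ast (W')^\vee,
\]
the element $A \oplus A'$ corresponds to $A \otimes 1 + 1 \otimes A'$. These two summands each have even exterior degree, and one lies in the left tensor factor while the other lies in the right; hence they commute in the graded-commutative algebra $\Lambda^\ast W^\vee \otimes \Lambda^\ast (W')^\vee$. Therefore
\[
e^{A \oplus A'} = e^{A \otimes 1} \cdot e^{1 \otimes A'} = e^A \otimes e^{A'}.
\]
The top exterior power of the direct sum factors as $\Lambda^{top}((W \oplus W')^\vee) \cong \Lambda^{top}W^\vee \otimes \Lambda^{top}(W')^\vee$, and the projection $\pi_{top}$ factors correspondingly as $\pi_{top} \otimes \pi_{top}$. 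Applying this to $e^A \otimes e^{A'}$ yields $\pf(A) \otimes \pf(A')$, as desired. The convention $\Lambda^{top} 0^\vee = \C$ handles the degenerate cases $W = 0$ or $W' = 0$ transparently.

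Neither step presents a serious obstacle: part (1) is essentially a basis calculation and part (2) is formal manipulation of the exponential. The only mild subtlety is keeping signs straight in part (2), for which it is convenient to observe from the outset that both $A \otimes 1$ and $1 \otimes A'$ are even elements of the graded-commutative algebra, so that no signs appear when factoring the exponential.
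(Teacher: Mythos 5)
Your proof is correct, and it fills in exactly the direct verification that the paper leaves implicit (the lemma is stated there as ``immediate'' from the definition, with no written proof): part (1) by a basis computation using a kernel vector, part (2) by factoring $e^{A\oplus A'}=e^A\otimes e^{A'}$ and observing that the top-degree projection respects the tensor decomposition. No gaps; the parity discussion in part (1) and the evenness remark in part (2) handle the only points where care is needed.
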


\begin{remark}
\label{rem: pfafftodet}
We note that if $T: W_0\to W_1$ is any linear map, then $\check{T}:=T\oplus (-T^\vee)$ defines a skew-symmetric map $W\to W^\vee$, where $W: = W_0\oplus W_1^\vee$. In this case, 
\[\ker \check T= \ker T \oplus \left(\coker T\right)^\vee,\] 
\[
\Lambda^{top} W^\vee = \Lambda^{top} W_0^\vee \otimes \Lambda^{top} W_1,
\] 
and
\[
\Lambda^{top} (\ker \check T )^\vee= \Lambda^{top} (\ker T)^\vee \otimes \Lambda^{top} \coker T.
\]
If, further, $\dim W_0=\dim W_1=:n$, then the Pfaffian $\pf(\check T)$ coincides with $(-1)^{n(n-1)/2}\det T\in \Lambda^{top} W_0^\vee \otimes \Lambda^{top} W_1$. 
\end{remark}

There is also the notion of \emph{Pfaffian homomorphism: }

\begin{definition}
Choose a splitting $W=\ker A\oplus K$. This induces an inclusion $(\ker A)^\vee\hookrightarrow W^\vee$. The \textbf{Pffafian homomorphism} corresponding to this splitting is the composite
\[
\begin{tikzcd}
\text{Pf}(A): \Lambda^{top} (\ker A)^\vee \ar[r,hook]& \Lambda^{*}W^\vee \ar[r,"\cdot e^A"]&\Lambda^{*}W^\vee\ar[r]&\Lambda^{top}W^\vee,
\end{tikzcd}
\]
where the last arrow is simply projection onto the appropriate direct summand, and the penultimate map is multiplication by $e^A$.
\end{definition}

As stated, the Pfaffian homomorphism depends on the choice of splitting. However, the next lemma shows that this is not the case.

\begin{lemma}
The Pfaffian homomorphism does not depend on the choice of splitting.
\end{lemma}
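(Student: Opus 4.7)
The plan is to establish independence through a dimension count: the difference between the two extensions of $\omega$ contains at least one factor in the annihilator $(\ker A)^\perp \subset W^\vee$, while $e^A$ is concentrated in $\Lambda^*((\ker A)^\perp)$, so the product lacks enough factors transverse to $(\ker A)^\perp$ to reach top degree.

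First I would isolate two structural facts that are independent of any splitting. Because $A \in \Lambda^2 W^\vee$ vanishes on $\ker A$, it actually lies in $\Lambda^2((\ker A)^\perp)$, where $(\ker A)^\perp$ denotes the (canonical, splitting-free) annihilator; consequently $e^A \in \Lambda^*((\ker A)^\perp)$. Separately, for any splitting $W = \ker A \oplus K$ the induced inclusion $i_K: (\ker A)^\vee \hookrightarrow W^\vee$ is the transpose of the corresponding projection $p_K: W \to \ker A$, so $i_K(\alpha)$ restricts to $\alpha$ on $\ker A$ regardless of the choice of $K$.

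Now, given two splittings with complements $K, K'$ and any $\alpha \in (\ker A)^\vee$, the difference $i_K(\alpha) - i_{K'}(\alpha)$ vanishes on $\ker A$ and therefore lies in $(\ker A)^\perp$. Writing $\omega = \alpha_1 \wedge \cdots \wedge \alpha_k$ for $k := \dim \ker A$ and expanding
\[
i_K(\omega) - i_{K'}(\omega) = \bigwedge_{j=1}^k i_K(\alpha_j) - \bigwedge_{j=1}^k i_{K'}(\alpha_j)
\]
as a telescoping sum, every summand contains at least one factor in $(\ker A)^\perp$. In other words, $i_K(\omega) - i_{K'}(\omega) \in (\ker A)^\perp \wedge \Lambda^{k-1} W^\vee$.

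The proof closes with a degree count. Choose any complement $V$ to $(\ker A)^\perp$ in $W^\vee$; then $\Lambda^{top} W^\vee \cong \Lambda^k V \otimes \Lambda^{\dim W - k}((\ker A)^\perp)$, so the top-degree projection annihilates any element with fewer than $k$ factors in $V$. But $(i_K(\omega) - i_{K'}(\omega)) \wedge e^A$ is a wedge of something in $(\ker A)^\perp \wedge \Lambda^{k-1} W^\vee$ with something in $\Lambda^*((\ker A)^\perp)$, contributing at most $k-1$ factors in $V$ in total. Hence $\pi_{top}((i_K(\omega) - i_{K'}(\omega)) \wedge e^A) = 0$, yielding $\text{Pf}_K(A)(\omega) = \text{Pf}_{K'}(A)(\omega)$. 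The only observation requiring any attention is the fact that $A \in \Lambda^2((\ker A)^\perp)$; the remainder is routine bookkeeping, so I anticipate no real obstacle.
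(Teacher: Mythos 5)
Your proof is correct and takes essentially the same route as the paper's: both arguments rest on the observation that $e^A$ lies in the exterior algebra of the annihilator of $\ker A$, so that the discrepancy between the two inclusions of $\omega$ --- carrying at least one factor in that annihilator and hence too few factors transverse to it --- is killed by the top-degree projection. Your coordinate-free telescoping comparison of $i_K$ and $i_{K'}$ replaces the paper's block-upper-triangular change-of-splitting matrix, but the degree count at the heart of the argument is identical.
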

\begin{proof}
Let $W=\ker A\oplus K$ be a splitting, and let $r=\dim\ker A$, $s=\dim K$, $n=r+s$. Note that, under this decomposition, we can view
\begin{equation}
\label{eq: simplifyA}
A\in \Lambda^2 K^\vee,
\end{equation}
since $A$ is skew-symmetric and vanishes when one or another of its inputs is in $\ker A$. A different splitting of $W$ can be represented by a block upper-triangular linear transformation 
\[
S:=\left(\begin{matrix}
id_{\ker A} & B\\
0 & C
\end{matrix}\right),
\]
where $B: K\to \ker A$ is any linear transformation and $C\in \Aut(K)$. It follows that, with respect to the fixed decomposition using $K$, the map 
\[
S^\flat:\Lambda^r(\ker A)^\vee \hookrightarrow \Lambda^rW^\vee
\]
arising from the new splitting is induced by the linear map 
\[
id_{\ker A^\vee}\oplus B^\vee: (\ker A)^\vee \to (\ker A)^\vee \oplus K^\vee \cong W^\vee.
\]
Writing 
\[
\Lambda^r W^\vee \cong \oplus_{i=0}^r \Lambda^i (\ker A)^\vee \otimes \Lambda^{r-i} K^\vee,
\]
$S^\flat$ has, in general, image in all summands of the above direct sum. Given $\omega\in \Lambda^r (\ker A)^\vee$, write $S^\flat(\omega)=\sum_{i=0}^r \omega_i,$ where $\omega_i \in  \Lambda^{r-i} (\ker A)^\vee \otimes \Lambda^{i} K^\vee$. In this notation, $\omega_0$ corresponds to the image of $\omega$ in the original map $\Lambda^r (\ker A)^\vee\to \Lambda^r W^\vee$. Note that 
\[
e^A\omega_i \in \Lambda^{r-i}(\ker A)^\vee \otimes \Lambda^* K^\vee,
\]
because of Equation \ref{eq: simplifyA}. Therefore, the only term in $e^A S^\flat(\omega)$ that could possibly survive under the projection $\Lambda^* W^\vee \to \Lambda^n W^\vee\cong \Lambda^r (\ker A)^\vee\otimes \Lambda^s K^\vee$ is $e^A\omega_0$, which is precisely the image of $\omega$ under the Pfaffian homomorphism corresponding to $K$.
\end{proof}

By virtue of the previous lemma, we can, without ambiguity, use the term ``the Pfaffian homomorphism.'' We will prove the following lemma in the next subsection. The proof uses the Batalin-Vilkovisky formalism.

\begin{lemma}
\label{lem: pfaffiso}
The Pfaffian homomorphism is an isomorphism.
\end{lemma}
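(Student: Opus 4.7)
My plan is to exploit the fact that both the domain $\Lambda^{top}(\ker A)^\vee$ and codomain $\Lambda^{top} W^\vee$ of the Pfaffian homomorphism are one-dimensional, so it suffices to exhibit a single nonzero image. Because the previous lemma establishes independence of the chosen splitting, I am free to work with any convenient splitting $W = \ker A \oplus K$. As noted in equation \eqref{eq: simplifyA}, with respect to such a splitting we may regard $A \in \Lambda^2 K^\vee$, and by the very definition of $\ker A$ the restriction $A|_K: K \to K^\vee$ is nondegenerate. In particular $\dim K$ is even, say $\dim K = 2m$, and $\dim \ker A = r$, with $n = r + 2m$.

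Next I would compute the Pfaffian homomorphism explicitly. For $\omega \in \Lambda^r (\ker A)^\vee$, using the splitting, $\omega$ lies inside $\Lambda^r(\ker A)^\vee \otimes \Lambda^0 K^\vee \subset \Lambda^r W^\vee$. Since $A \in \Lambda^2 K^\vee$, the product $e^A \wedge \omega$ decomposes according to bidegree in $(\ker A)^\vee$ and $K^\vee$, and the only summand that survives the projection onto $\Lambda^n W^\vee \cong \Lambda^r(\ker A)^\vee \otimes \Lambda^{2m} K^\vee$ is
\[
\frac{A^m}{m!} \wedge \omega \in \Lambda^{2m} K^\vee \otimes \Lambda^r (\ker A)^\vee.
\]
Thus the Pfaffian homomorphism is the tensor product of the identity on $\Lambda^r (\ker A)^\vee$ with the map $\C \to \Lambda^{2m} K^\vee$ sending $1 \mapsto A^m/m! = \pf(A|_K)$.

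It remains to verify that $\pf(A|_K) \neq 0$ in $\Lambda^{top} K^\vee$. The cleanest way is to invoke the classical identity $\pf(A|_K)^2 = \det(A|_K)$ (or to argue directly via a Darboux basis $\{e_1,f_1,\dots,e_m,f_m\}$ of $K$ in which $A|_K = \sum_i e_i^\vee \wedge f_i^\vee$, whence $A^m/m! = e_1^\vee \wedge f_1^\vee \wedge \dots \wedge e_m^\vee \wedge f_m^\vee$ is a generator of $\Lambda^{top} K^\vee$). Either way, nondegeneracy of $A|_K$ forces $\pf(A|_K) \neq 0$, which makes the Pfaffian homomorphism a nonzero linear map between two one-dimensional vector spaces, hence an isomorphism.

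I do not anticipate a substantive obstacle: the content of the argument is entirely concentrated in the combinatorial identity $A^m/m! = \pf(A|_K)$ together with the classical fact that the Pfaffian of a nondegenerate skew form is nonzero. The only place where one must be careful is the bookkeeping of signs and bidegrees in the expansion of $e^A \wedge \omega$, which the previous lemma's proof has already handled. I note that the proof the paper gives uses the BV formalism, whereas the route sketched above is a direct elementary computation; both should yield the same conclusion.
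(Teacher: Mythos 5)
Your proof is correct, but it takes a genuinely different route from the paper's. The paper proves Lemma \ref{lem: pfaffiso} by recognizing the Pfaffian homomorphism as the composite $\pi'\circ(\cdot e^A)\circ\iota''$ of the three maps furnished by Lemma \ref{lem: multbypartfunc} and the two deformation retractions of Proposition \ref{prop: quantobs}; each factor is a quasi-isomorphism, and a quasi-isomorphism between complexes with vanishing differential is an isomorphism. You instead give an elementary linear-algebra argument: both the domain $\Lambda^{top}(\ker A)^\vee$ and the codomain $\Lambda^{top}W^\vee$ are lines, the bidegree bookkeeping from the splitting-independence lemma shows the map is $\omega\mapsto \frac{A^m}{m!}\wedge\omega$ with $A|_K$ the induced nondegenerate form on $K$ and $\dim K=2m$, and a Darboux basis for $A|_K$ exhibits $A^m/m!$ as a generator of $\Lambda^{top}K^\vee$, so the map is nonzero and hence an isomorphism. (The one point to state with care is the identity $\pf(A|_K)^2=\det(A|_K)$, since here $\pf(A|_K)$ lives in a line rather than in $\C$; your Darboux-basis computation sidesteps this cleanly, and you should also note the degenerate cases $K=0$ and $\ker A=0$, both of which are immediate.) What each approach buys: yours is self-contained and requires none of the BV machinery, making the lemma accessible in isolation; the paper's argument is essentially free once Proposition \ref{prop: quantobs} is in place, and--more importantly for the paper's purposes--it is the form of the argument that survives the passage to infinite rank and to families over $B$, where $\Lambda^{top}W^\vee$ and the Darboux computation are no longer available but the quasi-isomorphism/deformation-retraction formulation persists.
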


\subsection{The Pfaffian in the BV formalism}
In this subsection, we describe the Batalin-Vilkovisky (BV) formalism as applied in the case of interest to us. In the interest of brevity, we will not discuss the general framework of the BV formalism; instead, we will simply present the chain complexes of observables the formalism produces. However, we wish to emphasize that these chain complexes arise not by any methods particular to the problem at hand, but by the very general methods of the BV formalism.  For a more general discussion of the BV formalism as applied to the massless free fermion, see \cite{rabaxial}. We refer to \cite{CG1} for a discussion of the BV formalism as applied to general free field theories. 

Just as in the previous subsection, we let $W$ be a vector space, and $A$ a skew-symmetric pairing on $W$. Then, we let $V$ denote the cochain complex in super-vector spaces.
\[
V:=
\Pi\left(\begin{tikzcd}
W\ar[r,"A"]& W^\vee[-1],
\end{tikzcd}\right)
\]
with the symbol $\Pi$ denoting that the whole complex is purely odd in the $\Z/2$ grading. The $\Z/2$ grading is meant to track the underlying particle statistics---odd for fermionic statistics and even for bosonic statistics---and the additional, cohomological grading on $V$ will further affect the commuting or anti-commuting nature of the corresponding  observables. With respect to this latter grading, the notation indicates that $\Pi W$ lives in cohomological degree 0 and $\Pi W^\vee$ in cohomological degree 1. We let $\ip$ denote the $-1$-shifted pairing on $V$ induced from the evaluation pairing $W^\vee\otimes W\to \C$. This pairing looks symmetric, but is actually graded skew-symmetric in the sense that it is skew-symmetric once appropriate Koszul sign rules for $\Z\times \Z/2$-graded objects are taken into account (see Remark \ref{rem: koszul}).

Let $V^\sharp$ denote the underlying $\Z$-graded vector space of $V^\vee$; then, we let
\[
\alg : = \Sym(V^\sharp).
\]

\begin{remark}
\label{rem: koszul}
The symmetric algebra is taken with respect to the Koszul sign rule for $\Z\times \Z/2$-graded objects. More precisely, when moving an object of $\Z$-grading $d_1$ and $\Z/2$-grading $\pi_1$ past an object with $\Z$-grading $d_2$ and $\Z/2$-grading $\pi_2$, we obtain a sign 
\[
(-1)^{d_1d_2+\pi_1\pi_2}.
\]
In the symmetric algebra $\alg$, this means that elements of $W$ (which live in degree --1) commute with each other and anti-commute with elements of $W^\vee$, which live in degree 0 and anti-commute with each other.
\end{remark}

We will be interested in several differentials on $\alg$; the first is the one which is simply induced by extending the differential on $V^\vee$ to be a derivation on $\alg$.
\begin{definition}
The complex of \textbf{classical BV observables} of the theory described by $V$ and $\ip$ is the cochain complex
\[
\Obcl:=\Sym(V^\vee).
\] 
In other words, $\Obcl$ has $\alg$ as its underlying vector space and $A^\flat$ as its differential. Here, we use $A^\flat$ to refer to the differentials on $V^\vee$ and $\Sym(V^\vee)$ induced from that on $V$.
\end{definition}

\begin{remark}
Note that $V^\vee= V[1]$, and the differential on $V^\vee$ is $-A$, because of the skew-symmetry of $A$. The differential on $\sO(V)=\Sym(V[1])$ is then induced from $-A$ by the requirement that the differential be a derivation.
\end{remark}

\begin{lemma}
\label{lem: classdefretract}
There exists a deformation retraction 
\[
\begin{tikzcd}
\left( \Sym((H^\bullet V)^\vee), 0\right )\arrow[r,shift left = .5 ex,"\tilde \iota"] &  (\Obcl,A^\flat)\arrow[l, shift left = .5 ex,"\tilde \pi"]\arrow[loop right, distance = 4em,start anchor = {[yshift = 1ex]east},end anchor = {[yshift=-1ex]east}]{}{\tilde \eta}
\end{tikzcd}
\]
of the classical observables onto the symmetric algebra of functions on the cohomology of $V$. 
\end{lemma}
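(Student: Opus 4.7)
The plan is to produce the strong deformation retraction in two steps: first at the level of the two-term complex $V^\vee$ itself, by choosing explicit linear splittings, and then to extend this to $\Obcl = \Sym(V^\vee)$ via the homological perturbation machinery assembled in Appendix \ref{sec: app}.

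For the first step, I work with $V = \Pi(W \xto{A} W^\vee[-1])$. Choose a splitting $W = \ker A \oplus K$. Because $A$ is skew-self-adjoint, its image in $W^\vee$ coincides with the annihilator of $\ker A$, so $A|_K : K \to \op{im}(A)$ is an isomorphism, and a choice of complement $L \subset W^\vee$ to $\op{im}(A)$ gives an isomorphism $L \iso \coker A$. These data determine: an inclusion $\iota: H^\bullet V \into V$ given in degree $0$ by $\ker A \into W$ and in degree $1$ by $\coker A \iso L \into W^\vee$; a projection $\pi: V \to H^\bullet V$ given in degree $0$ by $W \to \ker A$ along $K$ and in degree $1$ by $W^\vee \to L \iso \coker A$ along $\op{im}(A)$; and a homotopy $\eta: V \to V[-1]$ which vanishes on $\ker A$ and on $L$ and equals $(A|_K)^{-1}$ on $\op{im}(A)$. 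I then check $\pi\iota = \mathrm{id}$, $\mathrm{id}_V - \iota\pi = [A,\eta]$, and the side conditions $\pi\eta = 0$, $\eta\iota = 0$, $\eta^2 = 0$, all of which are straightforward from the definitions. Dualizing yields a side-condition-satisfying deformation retraction of $V^\vee$ onto $(H^\bullet V)^\vee$ (the sign flip from $A$ to $-A$ on $V^\vee$ is absorbed into the dualized $\eta$).

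For the second step, I appeal to the standard fact -- which I expect the appendix to assemble as a consequence of the homological perturbation lemma -- that any side-condition-satisfying strong deformation retraction $(C,d) \rightleftarrows (H,0)$ of cochain complexes extends functorially to a side-condition-satisfying strong deformation retraction $(\Sym C, \Sym d) \rightleftarrows (\Sym H, 0)$, where the symmetric algebras are formed with respect to the Koszul sign convention of Remark \ref{rem: koszul}. Applied to $V^\vee$ and $(H^\bullet V)^\vee$, this produces the desired $\tilde \iota, \tilde \pi$, and $\tilde \eta$.

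The principal obstacle is mild: at the finite-dimensional two-term level everything is explicit, and in extending to the symmetric algebra the only subtlety is the combinatorics of the extended homotopy $\tilde \eta$ -- a weighted, sign-corrected sum over monomials $v_1 \cdots v_n$ of insertions of $\eta$ on one factor and of $\iota \pi$ on the factors that precede it -- but this is precisely what the appendix's extension result packages, so the proof reduces to invoking it.
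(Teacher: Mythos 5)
Your proposal is correct and follows essentially the same route as the paper: split $V$ into its cohomology plus an acyclic complement (the paper uses a Hermitian metric on $W$ to fix the complements $K=(\ker A)^\perp$ and $L=(\Im A)^\perp$, where you allow arbitrary splittings), take $\eta$ to be $A^{-1}$ on the acyclic part and $0$ on the cohomology, and then invoke the standard extension of a deformation retraction to symmetric algebras (the paper cites Section 2.5.3 of \cite{othesis} for this step rather than the appendix, which only contains the perturbation and composition lemmas).
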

\begin{proof}
Let us choose a metric (i.e. a Hermitian inner product) on $W$. This enables us to write 
\begin{equation}
\label{eq: decompofV}
V\cong H^\bullet V \oplus V_{\perp},
\end{equation}
where $V_\perp$ is acyclic. More explicitly, we identify $H^\bullet V$ in $V$ as $\ker A \oplus \left((\Im A)^\perp[-1]\right)$ (where the superscript $\perp$ indicates orthogonal complement with respect to the chosen metric). Moreover, 

\[
V_\perp:=
\begin{tikzcd}
(\ker A)^\perp\ar[r,"A"]& \Im A
\end{tikzcd}.
\]
Letting $\eta$ denote the operator on $V$ which is 0 on $H^\bullet V$ and which is $A^{-1}$ on $V_\perp$, we have a deformation retraction

\[
\begin{tikzcd}
\left( H^\bullet V, 0\right )\arrow[r,shift left = .5 ex,"\iota"] &  (V,A)\arrow[l, shift left = .5 ex," \pi"]\arrow[loop right, distance = 4em,start anchor = {[yshift = 1ex]east},end anchor = {[yshift=-1ex]east}]{}{\eta}
\end{tikzcd}.
\]
Here, $\iota$ and $\pi$ are the inclusion and projection maps afforded by the decomposition of $V$ in equation \ref{eq: decompofV}. Moreover, $\pi\iota = id_{H^\bullet V}$ and $\eta$ is a cochain homotopy between $\iota\pi$ and the identity on $V$. As described in Section 2.5.3 of \cite{othesis}, these data also give rise to the deformation retraction of the lemma.

\end{proof}

\begin{remark}
We have given more than just a computation of the cohomology of the classical observables. We have also given a specific pair of inverse quasi-isomorphisms between the cohomology of the classical observables and the full complex of observables, along with an explicit contracting homotopy. When $W$ is replaced by an infinite-rank vector space, the construction of the full deformation retraction will allow us to avoid the sort of subtleties which appear when one attempts to compute cokernels in abelian (or additive) categories of infinite-dimensional vector spaces.
\end{remark}

Now, let us turn to the quantum observables. The quantum observables have the same underlying cochain complex, but a differential which is deformed by an operator known as the \textit{BV Laplacian}. To define this operator, note that the pairing $\ip\in \Lambda^2 V^\vee$ is non-degenerate. This implies that it has an inverse $\ip^{-1}\in \Sym^2 V$ defined by the equation 
\[
-(id_V\otimes \ip) \left(\ip^{-1}\otimes v\right) = v.
\]
Using this equation, one can verify that $\ip^{-1}$ is indeed symmetric, though $\ip$ is anti-symmetric.
\begin{definition}
The \textbf{BV Laplacian}, denoted $\Delta$, is the unique operator on $\alg$ which is 0 on $\Sym^{\leq 1}(V^\sharp)$,  contraction with $\ip^{-1}$ on $\Sym^2(V^\sharp)$, and which is extended to the rest of $\alg$ as a second-order differential operator.
\end{definition}

\begin{remark}
To get a better feel for $\Delta$, choose a basis $\{e_i\}_{i=1}^n$ for $W$. Let $\{x_i\}$ denote the dual basis, thought of as a collection of elements of cohomological degree 0 in $V^\vee$. Similarly, let $\xi_i$ denote $e_i$, but thought of as an element of degree --1 in $V^\vee$. Then, 
\[
\alg  \cong \C[x_1, \cdots, x_n, \xi_1, \cdots, \xi_n]
\]
where the $x_i$ have $\Z$-degree 0 and anti-commute with each other, while the $\xi_i$ have $\Z$-degree --1 and commute with each other. The $x_i$'s and the $\xi_i$'s anti-commute with each other. Under this isomorphism, 
\[
\Delta = \sum_{i=1}^n \frac{\partial^2}{\partial x_i \partial \xi_i}.
\]
If $\{e'_i\}$ is another choice of basis for $W$ and $\{x'_i\}$, $\{\xi'_i\}$ are the correspondingly adjusted elements of $V^\vee$, then
\[
\Delta = \sum_{i=1}^n \frac{\partial^2}{\partial x'_i \partial \xi'_i}.
\]
In this sense, the characterization of $\Delta$ is independent of the choice of basis $\{e_i\}$.
\end{remark}

\begin{definition}
The \textbf{quantum observables} associated to $V, \ip$ are the cochain complex
\[
(\alg, A^\flat+\Delta).
\]
We denote these by $\Obq$. The \textbf{trivial quantum observables} associated to the pair $(V,\ip)$ are the cochain complex 
\[
(\alg, \Delta).
\]
We denote these by $\Obq_0$.
\end{definition}
\begin{remark}
We have not proven that $A^\flat+\Delta$ and $\Delta$ are differentials. We leave it as a straightforward exercise to check that $\Delta^2=0$ and $[A^\flat, \Delta]=0$.
\end{remark}

\begin{remark}
Though this isn't reflected in the notation, $\Obq$ depends on $A$; on the other hand, $\Obq_0$ is the cochain complex of quantum observables for the theory with the choice $A=0$.
\end{remark}

We now turn to the question of computing the cohomology of the (trivial) quantum observables. To begin with, note that, since $e^{A}\in \Lambda W^\vee$, and $\Lambda W^\vee\subset \alg$, we can think of $e^{A}$ as living in either $\Obq$ or $\Obq_0$. 
\begin{lemma}
\label{lem: multbypartfunc}
Multiplication by $e^{A}$ is a cochain isomorphism between $\Obq$ and $\Obq_0$. 
\end{lemma}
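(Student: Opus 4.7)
The plan is to verify directly the chain-map identity $(A^\flat + \Delta)\circ m_{e^A} = m_{e^A}\circ \Delta$, where $m_{e^A}$ denotes left multiplication by $e^A$. Since $A \in \Lambda^2 W^\vee \subset \alg$ has cohomological degree $0$ and $\Z/2$-degree $0$, the series $e^A = \sum_{k\ge 0} A^k/k!$ is a finite sum (as $W$ is finite-dimensional) and defines an even, degree-zero element of $\alg$. Multiplication by $e^A$ is therefore an automorphism of the underlying graded vector space, with inverse $m_{e^{-A}}$, so the lemma reduces entirely to the intertwining identity above.

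First I would handle the $A^\flat$ contribution. Because $A^\flat$ is the derivation extending the internal differential of $V^\vee$, and that differential vanishes on the degree-zero summand $W^\vee$, the operator $A^\flat$ kills every element of $\Sym(W^\vee)\subset \alg$; in particular $A^\flat(e^A) = 0$. The derivation property (together with the fact that $e^A$ is even of cohomological degree zero) then gives $A^\flat(e^A f) = e^A A^\flat f$. Next I would handle $\Delta$ using the BV bracket $\{-,-\}$, which measures the failure of $\Delta$ to be a derivation. In the coordinates $\Delta = \sum_i \partial_{x_i}\partial_{\xi_i}$ the element $e^A$ involves no $\xi_i$'s, so $\Delta(e^A) = 0$; moreover, since the $x_i$'s mutually commute, $\partial_{x_i}e^A = e^A(\partial_{x_i}A)$, and hence
\[
\{e^A,f\} = \sum_i (\partial_{x_i}e^A)\,\partial_{\xi_i} f = e^A\,\{A,f\}.
\]
Consequently $\Delta(e^A f) = e^A\Delta f + e^A\{A,f\}$.

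The final step is to identify $\{A,\cdot\}$ with $-A^\flat$. In coordinates, $A = \tfrac12 A_{ij} x_i x_j$ yields $\partial_{x_i}A = \sum_j A_{ij} x_j$, while the differential on $V^\vee$ (which carries an extra sign relative to $A$, as noted in the remark after the definition of $\Obcl$) sends $\xi_i \mapsto -\sum_j A_{ij} x_j$, so $A^\flat f = -\sum_{i,j} A_{ij} x_j\,\partial_{\xi_i} f = -\{A,f\}$. Adding the two computations gives $(A^\flat + \Delta)(e^A f) = e^A A^\flat f + e^A \Delta f + e^A\{A,f\} = e^A\Delta f$, as required. The hard part will be keeping track of the $\Z\times\Z/2$ Koszul signs, particularly in the second-order formula for $\Delta$ and in the sign twist on $V^\vee$; otherwise the computation is mechanical and expresses the standard principle that, in a free theory, the classical BV differential is implemented by the BV bracket with the quadratic action.
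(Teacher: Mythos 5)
Your computation is correct and is exactly the ``direct computation'' the paper leaves to the reader: $e^A$ is an even, degree-zero, invertible element of $\alg$ (a finite sum since $\dim W<\infty$), both $A^\flat$ and $\Delta$ annihilate it, and the second-order defect of $\Delta$ produces the bracket $e^A\{A,-\}$ that exactly cancels the classical term $e^A A^\flat$. The only point to watch is the direction: with your sign conventions the identity exhibits $m_{e^{A}}$ as a chain map $\Obq_0\to\Obq$ (the paper variously writes the map as $\Obq\to\Obq_0$), but since $m_{e^{A}}$ is invertible with inverse $m_{e^{-A}}$ this is immaterial to the lemma as stated.
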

\begin{proof}
Direct computation.
\end{proof}

Let us now describe some results on the cohomologies of the (trivial) quantum observables. We refer the reader to Propositions 5.7, 5.8 and Corollary B of \cite{rabaxial} for the details. The proof of the result below is nearly identical to the aforementioned propositions of \cite{rabaxial}, so we do not show it here. A key component of the proof is the application of the homological perturbation lemma (Lemma \ref{lem: hpl}) to the deformation retraction of Lemma \ref{lem: classdefretract}.

\begin{proposition}
\label{prop: quantobs}
There are deformation retractions 
\[
\begin{tikzcd}
\left( \Lambda^{top} W^\vee ,0\right )\arrow[r,shift left = .5 ex," \iota'"] &  \Obq_0\arrow[l, shift left = .5 ex," \pi'"]\arrow[loop right, distance = 4em,start anchor = {[yshift = 1ex]east},end anchor = {[yshift=-1ex]east}]{}{ \eta'}
\end{tikzcd}
\]
and 
\[
\begin{tikzcd}
\left( \Lambda^{top} (\ker A)^\vee ,0\right )\arrow[r,shift left = .5 ex," \iota''"] &  \Obq\arrow[l, shift left = .5 ex," \pi''"]\arrow[loop right, distance = 4em,start anchor = {[yshift = 1ex]east},end anchor = {[yshift=-1ex]east}]{}{ \eta''}
\end{tikzcd}.
\]

Here, $\pi'$ is the projection 
\[
\alg = \bigoplus_{i,j=0}^\infty \Lambda^i W^\vee \otimes\Sym^j W \to \Lambda^{top} W^\vee.
\]
and $\iota''$ is the inclusion $\Lambda^{top} (\ker A)^\vee\hookrightarrow \alg$ arising from the choice of metric on $W$. Similarly, $\iota'$ is the inclusion $\Lambda^{top}W^\vee \hookrightarrow \Lambda^*W^\vee\hookrightarrow \sO(V)$.
\end{proposition}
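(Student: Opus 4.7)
The plan is to build both deformation retractions using the homological perturbation lemma (HPL), together with an explicit Koszul-type computation. The two statements are proved in parallel: first I would construct the retraction for $\Obq_0$ by a direct, coordinate-based argument, and then deduce the retraction for $\Obq$ by applying HPL to the classical deformation retraction of Lemma \ref{lem: classdefretract}.

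For $\Obq_0 = (\alg, \Delta)$, the classical deformation retraction degenerates (since $A=0$ gives $H^\bullet V = V$), so I would instead fix a basis $\{e_i\}$ of $W$ and identify $\alg$ with the graded tensor product over $i$ of single-variable complexes $\C[x_i, \xi_i]$, on which $\Delta$ acts as $\sum_i \partial_{x_i}\partial_{\xi_i}$ and each factor's operator is just $\partial_{x_i}\partial_{\xi_i}$. On each single-variable factor, there is an explicit deformation retraction onto $\C x_i$: the projection kills every monomial except $x_i$, the inclusion is the obvious one, and the contracting homotopy is obtained by inverting $\Delta$ on a splitting of the contractible complement of the top class (e.g., sending $\xi_i^k$ to a multiple of $x_i \xi_i^{k+1}$ for $k\geq 0$ and vanishing on monomials involving $x_i$). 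Taking the graded tensor product of these factor-wise retractions yields the desired deformation retraction of $\Obq_0$ onto $\Lambda^{top} W^\vee$; one checks that the resulting maps coincide with the $\iota'$ and $\pi'$ of the statement.

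For the full observables $\Obq = (\alg, A^\flat + \Delta)$, I would apply HPL (Lemma \ref{lem: hpl}) to the classical deformation retraction of Lemma \ref{lem: classdefretract} with $\Delta$ treated as the perturbation, obtaining a deformation retraction
\[
\bigl(\Sym((H^\bullet V)^\vee),\, \Delta_{\mathrm{red}}\bigr) \leftrightarrows (\Obq, A^\flat + \Delta),
\]
where $\Delta_{\mathrm{red}} = \sum_{k \geq 0} \tilde\pi\, \Delta\, (\tilde\eta\, \Delta)^k\, \tilde\iota$ is the HPL-perturbed differential. Using the metric splitting to identify $H^\bullet V$ with a cochain complex of the form $\ker A \oplus (\ker A)^\vee[-1]$ (together with the induced restriction of $\ip^{-1}$), I would verify that $\Delta_{\mathrm{red}}$ is precisely the BV Laplacian of the trivial theory on the reduced space $\ker A$. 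Applying the previous step with $W$ replaced by $\ker A$ then produces a deformation retraction of $(\Sym((H^\bullet V)^\vee), \Delta_{\mathrm{red}})$ onto $\Lambda^{top}(\ker A)^\vee$, and composing with the HPL output delivers the second deformation retraction of the proposition.

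The main obstacle is the identification of $\Delta_{\mathrm{red}}$ with the reduced BV Laplacian. The HPL output is a priori an infinite series, and one must argue that the higher-order compositions $\tilde\pi\, \Delta\, (\tilde\eta\, \Delta)^k\, \tilde\iota$ for $k \geq 1$ either vanish or reassemble correctly into the expected second-order operator. This hinges on the precise form of $\tilde\eta$ from Lemma \ref{lem: classdefretract}---built from $A^{-1}$ on the acyclic summand $V_\perp$ and extended as a derivation---together with a careful sign and combinatorial bookkeeping in the $\Z \times \Z/2$-graded setting. As indicated in the statement, the argument runs in parallel to Propositions 5.7 and 5.8 of \cite{rabaxial}, and the finite-dimensional situation here is a direct, strictly simpler adaptation.
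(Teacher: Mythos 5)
Your proposal is correct and follows the same route the paper indicates: a direct Koszul-type computation for $\Obq_0$ (mirroring Proposition 5.7 of \cite{rabaxial}) and, for $\Obq$, the homological perturbation lemma applied to the deformation retraction of Lemma \ref{lem: classdefretract}, with the reduced differential identified as the BV Laplacian on $\ker A$ (the higher HPL terms vanish because the metric splitting is orthogonal for the evaluation pairing, so $\ip^{-1}$ has no cross terms and $\tilde\eta$ kills the image of $\Delta\tilde\iota$). The paper itself only cites \cite{rabaxial} for the details, and your sketch supplies exactly the argument it points to.
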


\begin{remark}
\label{rem: berezin}
The BV formalism is a homological approach to integration. Because of the purely odd $\Z/2$ grading of $V$, the integration occurring in the example at hand is over the super-manifold $\Pi W$. $\Lambda^{*}W^\vee$ (in other words, the space of functions on $\Pi W$) is a subcomplex of $\Obq_0$, and the map $\pi'\mid_{\Lambda^*W^\vee}:\Lambda^* W^\vee \to \Lambda^{top}W^\vee$ is simply the Berezin integral. Moreover, the composition $\pi'\circ(\cdot e^A)$ computes Berezin integrals of functions on $\Pi W$ against the measure $e^A$, i.e. it is the map which computes expectation values in the theory determined by $A$, in physicists' terminology. Thus, the image of $1\in \Obq$ under $\pi'\circ (\cdot e^A)$ computes, on the one hand, the Berezin integral of $e^A$ (which physicists would call the partition function). On the other hand, this image is the Pfaffian of $A$, by definition. This is the sense in which the fermionic partition function is given by the Pfaffian of $A$.

We comment also that, unless $W$ carries extra data so that $\Lambda^{top}W^\vee$ is trivialized, expectation values of observables are not canonically numbers. This point becomes significant once $W\to B$ is allowed to be a general vector bundle over a base manifold $B$ (and so $\Lambda^{top}W^\vee$ may or may not be trivializable at all).
\end{remark}

Now, we can prove lemma \ref{lem: pfaffiso}.

\begin{proof}[Proof of lemma \ref{lem: pfaffiso}]
It follows directly from the definitions that the Pfaffian homomorphism is the map $\pi' \circ (
\cdot e^A)\circ \iota''$. Each of the three maps is a quasi-isomorphism, so the composite is also a quasi-isomorphism. A quasi-isomorphism between cochain complexes with 0 differential is simply an isomorphism, hence the lemma. 
\end{proof}

\begin{remark}
Notice that $\Obq$ and $\Obq_0$ are polynomial algebras, not power series algebras. $e^A$ is polynomial in $A$ thanks in part to the finite-dimensionality of $W$. In the infinite-dimensional case, however, $e^A$ is no longer polynomial in $A$. Thus, there will be no cochain isomorphism $\Obq_0\to\Obq$ in the infinite-dimensional case. Most of the techniques we have studied in this section will nevertheless apply.
\end{remark}

\subsection{Finite-dimensional families}
\label{subsec: findimfamilies}
Let now $B$ be a manifold, $W\to B$ a vector bundle on $B$, and $A:W\to W^\vee$ a skew-symmetric vector bundle endomorphism. We can define, as above, a $\Z$-graded vector bundle $V\to B$, and bundles $\sO(V)\to B$, $\Obcl\to B$, $\Obq\to B$. More precisely, $\sO(V)=\Sym(V^\vee)$, where the symmetric algebra and dual are taken in the vector bundle sense. We let $\shObcl$ denote the sheaf on $B$ of sections of $\Obcl$ and similarly for $\shObq$. The classical and quantum BV differentials endow $\Obcl$ and $\Obq$ with fiberwise differentials. These induce differentials on $\shObcl$ and $\shObq$, which endow these graded sheaves with the structure of complexes of sheaves of $\cinfty_B$ modules. 

We can apply the discussion of the previous section fiberwise: the cohomology of $\Obq_b$ is $\Lambda^{top}(\ker A_b)^\vee$, where $b\in B$ and the subscript $b$ denotes the fiber at $b$. However, it is not clear what sort of structure the fibers $\Lambda^{top}(\ker A_b)^\vee$ fit into as $b$ varies over $B$; the issue is that the rank of $\ker A_b$ might jump discontinously as $b$ varies. 

As discussed in the Introduction, it was the insight of Quillen to give $\Lambda^{top}(\ker A)^\vee$ the structure of a smooth vector bundle over $B$. Because we are in the finite-rank case, we have the (inverse of the) fiberwise Pfaffian isomorphism 
\[
\Lambda^{top}W_x^\vee \to \Lambda^{top} (\ker A_x)^\vee.
\]
We can view this isomorphism as \textit{defining} the smooth structure on $\Lambda^{top}(\ker A)^\vee$ in such a way that the Pfaffian isomorphism is a smooth bundle map. Moreover, just as above, we have a cochain isomorphism
\[
e^A: \shObq \to \shObq_0.
\]
Finally, for sufficiently small open subsets $U\subset B$, we can construct deformation retractions
\[
\defretract{\Gamma(U,\Lambda^{top}W^\vee)}{\shObq_0(U)}{\iota'}{\pi'}{\eta'}
\]
by performing the constructions of Proposition \ref{prop: quantobs} locally. By the characterization of $\iota'$ and $\pi'$ in Proposition \ref{prop: quantobs}, it is clear that $\iota'$ and $\pi'$ extend to well-defined maps of complexes of sheaves on $B$. In fact, $\iota'$ and $\pi'$ arise from bundle maps 
\[
\begin{tikzcd}
\Lambda^{top} W^\vee \ar[r, shift left = 0.5ex,"\iota'"]&\Obq_0\ar[l,shift left = .5ex,"\pi'"].
\end{tikzcd}
\] 
(From the explicit formula for $\eta'$ in \cite{rabaxial}, it is evident that $\eta'$ does not so extend, though it is enough for our purposes that $\eta'$ be locally well-defined.)  Hence, because of the above locally-constructed deformation retractions, we have shown the result of the following lemma:

\begin{lemma}
\label{lem: famquantobs}
The maps $\iota', \pi'$ in Proposition \ref{prop: quantobs} generalize to mutually inverse quasi-isomorphisms 
\[
\begin{tikzcd}
\Gamma(\cdot, \Lambda^{top}W^\vee) \ar[r,shift left= .5ex,"\iota'"]&\shObq_0\ar[l,shift left= .5ex,"\pi'"],
\end{tikzcd}
\]
where the maps are the natural inclusions and projections that arise after one observes that $\shObq_0=\bigoplus_{i,j=0}^\infty \Gamma(\cdot, \Lambda^i W^\vee \otimes \Sym^j W)$ (isomorphism of graded sheaves).
\end{lemma}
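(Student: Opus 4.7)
The plan is to split the lemma into two assertions: first, that $\iota'$ and $\pi'$ define well-defined maps of complexes of sheaves of $\cinfty_B$-modules on $B$; and second, that they are mutually quasi-inverse. For the first assertion, I would note that the decomposition $\shObq_0 = \bigoplus_{i,j \geq 0} \Gamma(\cdot, \Lambda^i W^\vee \otimes \Sym^j W)$ arises from an honest direct sum decomposition of the graded vector bundle $\Obq_0 \to B$ into subbundles. Consequently $\iota'$ (inclusion of the $(i,j) = (\rk W, 0)$ summand) and $\pi'$ (projection onto the same summand) come from genuine vector bundle maps over $B$, and hence induce maps of sheaves of $\cinfty_B$-modules. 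That $\iota'$ and $\pi'$ intertwine the quantum differential $\Delta$ follows from Proposition \ref{prop: quantobs}, since $\Delta$ is a fiberwise second-order operator on $\Obq_0$, so the fiberwise conditions $\Delta_b \circ \iota'_b = 0$ and $\pi'_b\circ\Delta_b=0$ assemble to the cochain condition globally. Similarly, the identity $\pi'\circ\iota' = \op{id}$ holds strictly and globally because it holds fiberwise.

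For the quasi-isomorphism claim, I would exploit the fact that being a quasi-isomorphism of sheaves is a local property. Choose an open cover of $B$ by sets $U$ admitting a smoothly varying Hermitian metric on $W|_U$ (any relatively compact open will do). Over such a $U$, the pointwise constructions in the proof of Proposition \ref{prop: quantobs} can be carried out simultaneously for all fibers: the orthogonal decomposition used in Lemma \ref{lem: classdefretract} and the subsequent application of the homological perturbation lemma produce an operator $\eta'|_U$ that is $\cinfty_U$-linear and provides a deformation retraction
\[
\defretract{\Gamma(U, \Lambda^{top}W^\vee)}{\shObq_0(U)}{\iota'|_U}{\pi'|_U}{\eta'|_U}.
\]
Hence $\iota'|_U$ and $\pi'|_U$ are quasi-isomorphisms on $U$; since taking cohomology commutes with restriction to opens, $\iota'$ and $\pi'$ are quasi-isomorphisms of sheaves on $B$. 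The mutual-inverse statement means $\pi'\circ\iota' = \op{id}$ on the nose (already verified above) and $\iota'\circ\pi'$ chain-homotopic to the identity, the latter holding locally via $\eta'|_U$, which suffices for the sheaf-level quasi-isomorphism assertion.

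The principal subtlety---and the reason one cannot simply mimic the finite-dimensional proof globally---is that the homotopy $\eta'$ depends on the non-canonical choice of Hermitian metric (equivalently, on the orthogonal complement to $\ker A$ in each fiber), and therefore does \emph{not} globalize into a map of sheaves on $B$. Fortunately, $\iota'$ and $\pi'$ themselves are canonical and do globalize, and the lemma only asks that these specific maps be quasi-isomorphisms---a local question which the metric-dependent local deformation retraction answers directly. This local-to-global packaging is precisely the template I expect to reuse in the infinite-rank case of Theorem \ref{thm: main}, where no analogue of the cochain isomorphism $e^A:\shObq \to \shObq_0$ is available and these sheafy techniques become essential rather than merely convenient.
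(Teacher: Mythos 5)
Your proposal is correct and follows essentially the same route as the paper: $\iota'$ and $\pi'$ are induced by globally defined bundle maps (the inclusion of and projection onto the $\Lambda^{top}W^\vee$ summand), the quasi-isomorphism property is a local question settled by performing the constructions of Proposition \ref{prop: quantobs} over small opens, and the contracting homotopy $\eta'$ exists only locally, which is all that is needed. The one small inaccuracy is your attribution of the local retraction to the Hermitian metric and the splitting of Lemma \ref{lem: classdefretract}: that machinery pertains to $\shObq$ (where $\ker A$ must be split off), whereas for $\shObq_0$ the differential is just $\Delta$ and the relevant local deformation retraction is the one onto $\Lambda^{top}W^\vee$ from Proposition \ref{prop: quantobs} (via the cited propositions of the reference on the axial anomaly); this does not affect the validity of your argument.
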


As before, multiplication by $e^A$ is a cochain isomorphism $\shObq_0\to \shObq$, and the Pfaffian isomorphism is by definition a smooth bundle map $\Lambda^{top}(\ker A)^\vee\to \Lambda^{top}W^\vee$.  Hence, we also have mutually inverse quasi-isomorphisms $\shObq\leftrightarrow \Lambda^{top}(\ker A)^\vee$.

\begin{remark}
In comparing Lemma \ref{lem: famquantobs} with Proposition \ref{prop: quantobs}, we find that the contracting homotopies cease to be globally defined once $B\neq pt$. This is not a serious issue, since the only purpose of the homotopies is as an auxiliary device in proving the fact that the $\iota'$ and $\pi'$ are quasi-isomorphisms. In Section \ref{sec: inf-dim} (see Remark \ref{rem: iotafails}), we will see that, in addition, the analogue of $\iota'$ will cease to be well-defined globally on $B$. Nevertheless, $\pi'$ will survive.
\end{remark}

\section{The determinant line bundle construction}
\label{sec: detbund}
\subsection{Families of Dirac operators}
\label{subsec: setup}
In this section, we give a rapid overview of the determinant bundle construction. All stated results in this section come from \cite{bgv}.

The fundamental datum used to define the massless free fermion in the BV formalism in \cite{rabaxial} was a formally self-adjoint Dirac operator $D$. For a brief outline of the definitions and results relevant to the theory of such operators, we refer the reader to section 2 of \cite{rabaxial}. In the case at hand, we would like to understand the necessary constructions in the case that one is presented with a family of Dirac operators.

Here, and throughout, let $\tau: M\to B$ be a smooth fiber bundle over the smooth manifold $B$ with compact fibers $\{M_b\}_{b\in B}$. Let $V\to M$ be a complex vector bundle with metric $(\cdot, \cdot)$ and an orthogonal decomposition $V=V^+\oplus V^-$. By $\sV$, we will denote the bundle over $B$ whose fiber at $b$ is $\Gamma(M_b, V_b)$. Smooth sections of $\sV$ are, by definition,
\[
\Gamma(U,\sV) := \Gamma(\tau^{-1}(U), V).
\]

Now, consider a smooth family of Dirac operators $D$ over $B$. By a smooth family of differential operators, we mean a collection $\{D_b\}_{b\in B}$, where $D_b$ is a differential operator on the bundle $V_b\to M_b$, such that the coefficients of the differential operator vary smoothly in $b$ in a coordinate expression for $D$ on a trivializing neighborhood for $V$ and $M$. We require each $D_b$ to be a formally self-adjoint Dirac operator on $V_{M_b}\to M_b$. In more detail $D_b$ consists of a pair of maps $D_b^\pm:\Gamma(M_b, V_b^\pm)\to \Gamma(M_b,V_b^\mp)$ such that $D_b^+D_b^-$ and $D^-_bD^+_b$ are generalized Laplacians for the same metric on $M_b$. In this situation, the bundle $\sV$ carries a canonical metric, which pairs sections using the inner product $(\cdot, \cdot)$ on $V$ and then integrates the resulting function against the Riemannian metric induced from $D_b$ over the compact fibers $M_b$.  The requirement that $D_b$ be formally self-adjoint translates into the requirement that $D_b^-$ be the formal adjoint of $D^+_b$ with respect to this metric on $\sV$.

\subsection{Spectral Cutoffs}
If $W_0$, $W_1$ are two finite-dimensional vector spaces of equal rank, and $T: W_0\to W_1$ is a linear operator, we can define $\det T: \Lambda^{top} W_0 \to \Lambda^{top} W_1$, which we can view equivalently as an element 
\[
\det T\in \Lambda^{top}W_0^\vee\otimes \Lambda^{top}W_1=:\Det T.
\]
When $W_0$ and $W_1$ are infinite-dimensional, $\Lambda^{top}W_0$ and $\Lambda^{top}W_1$ no longer make sense. One way out is to perform the following maneuver: in the finite-dimensional case, $T$  defines an isomorphism 
\[
(\Lambda^{top} W_0)^\vee\otimes \Lambda^{top}W_1\cong (\Lambda^{top} \ker T)^\vee\otimes \Lambda^{top} \coker T
\]
In the case we are interested in, $T$ is nevertheless Fredholm, so that $(\Lambda^{top} \ker T)^\vee\otimes \Lambda^{top} \coker T$ makes sense even when $(\Lambda^{top} W_0)^\vee\otimes \Lambda^{top}W_1$ does not. 

Hence, our candidate for the determinant bundle of a family of formally self-adjoint Dirac operators is $(\Lambda^{top}\ker D^+)^\vee\otimes \Lambda^{top}\coker D^+$. However, while this construction makes sense fiber-by-fiber over $B$, the dimension of $\ker D^+$ and $\coker D^+$ may jump and so, as mentioned in the second paragraph of the introduction, the smooth structure on this candidate is not manifest.

The solution is to introduce finite-dimensional bundles $K_\lambda$, one for each $\lambda\geq 0$, which are ``bigger'' than $\ker D^+$ and $\coker D^+$ but can nevertheless be used to define a determinant line $\Det D_\lambda$. $\Det D_\lambda$ is fiberwise isomorphic to $(\Lambda^{top}\ker D^+)^\vee\otimes \Lambda^{top}\coker D^+$, but $K_\lambda$ will be constant rank, so that $\Det D_\lambda$ will have a manifest smooth structure. The drawback of the bundles $K_\lambda$, however, is that they are defined only over open subsets  $U_\lambda$ of $B$, and so we will have to define transition functions $\Det D_\lambda \to \Det D_\mu$  on the overlaps $U_\lambda\cap U_\mu$.

Let us now flesh out the details of this approach. Let $\lambda\geq 0$, and define $U_\lambda$ to be the subset of $B$ on which $\lambda$ is not an eigenvalue of $D^2$. Given $b\in B$, let $P_{[0,\lambda)}^b$ be the family of orthogonal projectors onto the space spanned by the eigenvectors of $(D_b)^2$ with eigenvalue in $[0,\lambda)$, and define $P_{[0,\lambda)}$ to be the family $(P_{[0,\lambda)}^b\mid b\in B)$.  The following is proposition 9.10 in \cite{bgv}:
\begin{proposition}
\label{prop: spectralprojectors}
\begin{enumerate}
\item The sets $U_\lambda$ are open, and they cover $B$.
\item The family $P_{[0,\lambda)}$ is a smooth family of smoothing operators on $U_\lambda$. 
\item The vector spaces $(K_{\lambda})_b:=\Im(P^b_{[0,\lambda)})$ combine into a smooth finite-rank vector subbundle $K_\lambda$ of $\sV$ over $U_\lambda$. $K_\lambda$ possesses an orthogonal decomposition $K_\lambda^+\oplus K_\lambda^-$ which agrees with the original one on $V$.
\end{enumerate}
\end{proposition}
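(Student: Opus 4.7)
The plan is to prove everything via the holomorphic functional calculus, using the resolvent integral
\[
P_{[0,\lambda)}^b \;=\; \frac{1}{2\pi i}\oint_{\gamma}\bigl(z-D_b^2\bigr)^{-1}\,dz,
\]
where $\gamma$ is a positively oriented contour in $\mathbb{C}$ enclosing the part of $\mathrm{spec}(D_b^2)$ lying in $[0,\lambda)$ while avoiding $\lambda$. This formula will organize all three parts: the openness of $U_\lambda$ and smoothness of $P_{[0,\lambda)}$ in $b$ both reduce to smoothness and local constancy of the integrand, and the bundle structure of $K_\lambda$ falls out of the fact that $P_{[0,\lambda)}$ is then a smooth family of finite-rank projectors.

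For (1), I would first use that each $D_b$ is an elliptic, formally self-adjoint operator on the compact fiber $M_b$, so $D_b^2$ is essentially self-adjoint with discrete spectrum of finite multiplicity accumulating only at $+\infty$; hence for any $b_0\in B$ there exist arbitrarily small $\lambda>0$ not in $\mathrm{spec}(D_{b_0}^2)$, giving coverage. Openness of $U_\lambda$ follows from the fact that if $\lambda\notin\mathrm{spec}(D_{b_0}^2)$, then $(\lambda-D_{b_0}^2)^{-1}$ is bounded on the appropriate Sobolev space, and a Neumann series argument applied to the smooth $b$-dependence of the coefficients of $D_b^2$ shows $(\lambda-D_b^2)^{-1}$ remains bounded for $b$ near $b_0$.

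For (2), after trivializing the fiber bundle $\tau:M\to B$ and the vector bundle $V\to M$ on a small neighborhood of a chosen $b_0\in U_\lambda$, I regard the family $\{D_b^2\}$ as a smoothly $b$-dependent family of elliptic operators on a fixed compact manifold acting on sections of a fixed bundle. Choosing $\gamma$ to enclose only the eigenvalues in $[0,\lambda)$ at $b_0$, the same $\gamma$ continues to work on a neighborhood by part (1), and the smoothness in $b$ of $z\mapsto (z-D_b^2)^{-1}$ as a map into bounded operators on Sobolev spaces transfers, via the contour integral and elliptic regularity for the eigenfunctions, to smoothness of $P_{[0,\lambda)}$ as a family of operators with smooth Schwartz kernel---i.e.\ as a smooth family of smoothing operators.

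For (3), since $P_{[0,\lambda)}$ is a smooth family of projectors whose rank is constant on the connected components of $U_\lambda$ (the number of eigenvalues inside $\gamma$ cannot change without one crossing the contour, which would contradict $\lambda\notin\mathrm{spec}(D_b^2)$), its image assembles into a smooth finite-rank subbundle $K_\lambda\subset\sV|_{U_\lambda}$. The orthogonal decomposition $K_\lambda=K_\lambda^+\oplus K_\lambda^-$ is immediate from the observation that $D_b^+:\Gamma(V_b^+)\to\Gamma(V_b^-)$ and $D_b^-:\Gamma(V_b^-)\to\Gamma(V_b^+)$, so $D_b^2=D_b^- D_b^+\oplus D_b^+ D_b^-$ preserves chirality and therefore so does each of its spectral projectors. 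The principal technical obstacle is the smooth $b$-dependence of the resolvent in step (2): one must carefully handle that $M_b$ and the Riemannian structure induced by $D_b$ vary with $b$, which is done by choosing local fiber-bundle trivializations so that $b\mapsto D_b^2$ becomes a smooth path of differential operators on a fixed manifold, at which point the standard resolvent estimates apply.
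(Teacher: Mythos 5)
The paper does not prove this proposition itself---it is quoted verbatim as Proposition 9.10 of \cite{bgv}---and your argument via the Riesz projector $P_{[0,\lambda)}^b=\frac{1}{2\pi i}\oint_\gamma (z-D_b^2)^{-1}\,dz$, local trivialization of $\tau:M\to B$, resolvent smoothness in $b$, elliptic regularity for the kernel, and local constancy of the rank is exactly the standard proof given in that reference. The only point worth tightening is the rank-constancy step: eigenvalues of $D_b^2$ are real and nonnegative, so they can only enter or leave the region bounded by $\gamma$ through the point $\lambda$, which is excluded on $U_\lambda$; with that observation your proof is complete and correct.
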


Here, to say that $P_{[0,\lambda)}$ is a smooth family of smoothing operators means that there is a section of $V^{\boxtimes 2}$ on $B\times B$ such that convolution with this section gives the family of operators $P_{[0,\lambda)}$.

Similarly, we can define the projector $P_{(\lambda,\mu)}$ and vector bundle $K_{(\lambda,\mu)}$ over $U_\lambda\cap U_\mu$, and we have the following relations
\[
P_{[0,\mu)}=P_{[0,\lambda)}+P_{(\lambda,\mu)},
\]
\[
K_\mu=K_\lambda\oplus K_{(\lambda,\mu)},
\]
where the direct sum is orthogonal and the equalities hold only over $U_\lambda\cap U_\mu$. 

We denote by $D_\lambda$ the family of operators $D$ restricted to $K_\lambda$, and similarly for $D_\mu$ and $D_{(\lambda,\mu)}$. One evidently has the relation
\[
D_\mu = D_\lambda +D_{(\lambda,\mu)}
\]
over $U_\lambda\cap U_\mu$.
\subsection{The determinant line bundle and the determinant section}
With the finite- and constant-rank superbundles $K_\lambda$ in hand, we can finally define the determinant line bundle. To this end, let $K^\pm_\lambda$ be the components of $K_\lambda$ in the decomposition of Proposition \ref{prop: spectralprojectors}, and define 
\[
\Det(K_\lambda) := \Lambda^{top}(K^+_\lambda)^\vee\otimes \Lambda^{top}K^-_\lambda,
\]
Note that, on the overlaps $U_\lambda\cap U_\mu,$
\[
\Det(K_\mu)\cong \Det(K_\lambda)\otimes \Det(K_{(\lambda,\mu)}),
\]
and that $D$, when restricted to $K_{(\lambda,\mu)}$ is invertible, so that $\det(D_{(\lambda,\mu)}^+)$ defines an invertible element in $\Det(K_{(\lambda,\mu)})$, and so an isomorphism $\Det K_\mu\cong \Det K_\lambda$.

\begin{definition}
\begin{enumerate}
\item Let $K^+_\lambda$ and $K^-_\lambda$ denote the even and odd components of the super vector bundle $K_\lambda$. Over $U_\lambda$, set
\[
\Det(D)\mid_{U_\lambda}=\Det(K_\lambda)
\]
with transition functions
\[
\begin{tikzcd}
\Det(K_\lambda)\ar[rr,"\epsilon\cdot id\otimes \det(D_{(\lambda,\mu)}^+)"]&&\Det(K_\lambda)\otimes \Det(K_{(\lambda,\mu)})\cong \Det(K_\mu)
\end{tikzcd}
\]
(defined over $U_\lambda\cap U_\mu$), where $\epsilon = (-1)^{\rk K_{(\lambda,\mu)}^+(\rk K_{(\lambda,\mu)}^+-1)/2}$. It is straightforward to check that the transition functions satisfy the cocycle property on the triple overlaps. $\Det(D)$ is the \textbf{determinant line bundle} for the family of Dirac operators $D$.
\item Suppose $K_\lambda^+$ and $K_\lambda^-$ have equal ranks for any value of $\lambda$ (which happens if and only if $(D^+)_x$ has index 0 for all $x\in U_\lambda$). The \textbf{determinant section} $\det(D^+)\in \Gamma(B,\Det(D))$ is given by 
\[
(-1)^{\rk K_\lambda^+(\rk K_\lambda^+-1)/2}\det(D^+_\lambda)\in \Det(K_\lambda)
\]
on each open $U_\lambda$. So constructed, $\det(D^+)$ is equivariant with respect to the transition functions on the overlaps $U_\lambda\cap U_\mu$. The sign factor in the formulas is intended to account for the sign which appears when identifying $\det (T\oplus S)$ with $\det(T)\oplus \det (S)$ under the isomorphism $\Det T\otimes \Det S\cong \Det(T\oplus S)$.
\end{enumerate}
\end{definition}
So we have now defined half of the objects appearing in Theorem \ref{thm: main}. We turn now to a discussion of the massless free fermion in the Batalin-Vilkovisky formalism.

\section{The generalization to infinite dimensions}
\label{sec: inf-dim}
In this section, we prove Theorem \ref{thm: main}. To do so, we still have to define the sheaf $\shObq$ on $B$, which we do in subsection \ref{subsec: quantumobservablesbundle}.

\subsection{The bundles of BV observables of the massless free fermion in the infinite-dimensional case}
\label{subsec: quantumobservablesbundle}
We recall now the notation from Section \ref{subsec: setup}; $D$ is a family of Dirac operators parametrized by $B$, and $\sV$ is the infinite-rank bundle over $B$ on which $D$ acts. Let 
\[
\sS:= 
\Pi\left(\begin{tikzcd}[row sep = tiny]
\mathscr{V}^+[1] \ar[r,"D^+"] & \mathscr{V}^-\\
\oplus &  \oplus\\
(\mathscr{V}^{-})^{!}[1]\ar[r,"-(D^+)^!"]& (\sV^{+})^{!}
\end{tikzcd}\right),
\]
i.e. $\sS$ is a chain complex of super-vector bundles over $B$ concentrated in purely odd $\Z/2$-degree. Here, $\sV^!$ denotes the bundle over $B$ whose fiber over a point $b\in B$ is \[\Gamma(M_b, V^\vee_b)\] (with smooth structure defined analogously to that of $\sV$). Let
\[
\Obcl := \Sym(\sS),
\]
where here, $\Sym=\oplus \Sym^i$ and $\Sym^i$ is the operation which takes the fiberwise $i$-th completed projective tensor product of nuclear Fr\' echet spaces. More precisely, the fiber of $\Sym^i(\sS)$ over $b\in B$ is the subspace of 
\[
\Gamma(\overbrace{M_b\times \cdots\times M_b}^{i\text{ factors}},\overbrace{S_b\boxtimes \cdots \boxtimes S_b}^{i\text{ factors}})
\] 
consisting of sections which are symmetric under the natural $\Sigma_i$ action, where 
\[
S_b = (V^+)_b[1]\oplus (V^+)^!_b\oplus (V^-)^!_b[1]\oplus (V^-)_b
\] 
We note also that the symmetrization is taken with respect to the Koszul sign rules. $\Sym(\sS)$ has a differential which is induced from that on $\sS$ by the usual Leibniz rule. $\sS$ has a degree $+1$ pairing $\ip$ defined by the formula
\[
\ip[\varphi,\psi]=\int_{M_b}\varphi(\psi)dVol_{g_b}
\]
for $\varphi\in (\sV^-)^!_b=\Gamma(M_b,S_b^\vee),\psi \in \sV^-_b=\Gamma(M_b,S_b)$, and similarly for other combinations of fields.

 We define $\Delta$ to be the unique degree $+1$, second-order operator on $\Sym(\sS^\sharp)$ which is zero on $\Sym^{<2}$ and which is $\ip$ on $\Sym^2$. Then, define 
\[
\Obq:= \left(\Sym(\sS^\sharp),-\tilde D+\Delta\right),
\]
where $-\tilde D$ is simply the differential from $\Obcl$. We let $\shObq$ and $\shObcl$ denote the sheaves on $B$ of sections of $\Obq$ and $\Obcl$. Our ultimate goal is to show that there exists a quasi-isomorphism $\shObq\to \Det(D^+)$.

\begin{remark}
The most direct generalization of the discussion of the previous subsection would be to take 
\[
\Obcl:= \Sym(\sS'),
\]
where
\[
\sS':= 
\left(\Pi\left(\begin{tikzcd}[row sep = tiny]
\sV^+ \ar[r,"D^+"] & \sV^-[-1]\\
\oplus & \oplus\\
(\sV^-)^\vee \ar[r,"-(D^+)^\vee"]& (\sV^+)^\vee[-1]
\end{tikzcd}\right)\right)^\vee,
\]
where $(\sV^+)^\vee$ denotes the bundle over $B$ which is the fiberwise strong topological dual to $\sV^+$; however, this would introduce analytic difficulties that we can avoid by using $\sS$. In essence, to obtain $\sS$ from $\sS'$, we replace all occurences of the symbol ``$\vee$'' with the symbol ``$!$''.
\end{remark}

\subsection{Applications}
Having established all the relevant notation, we would like to discuss some important interpretations of Theorem \ref{thm: main} before embarking on its proof.
\begin{remark}
\label{rem: vevs}
The differential on $\Obq$ serves the purpose of encoding the fact that the path-integral of a total derivative is zero; an element in $\Obq_b$ is exact precisely if it is the divergence of a vector field on the space of fields (see Section 2.2. of \cite{CG1}). Theorem \ref{thm: main} states that, once one sets to zero all observables in $\Obq_b$ whose expectation value is zero for this reason, all observables are multiples of a single observable. $\Phi$ should therefore be interpreted as the map which computes expectation values, although it should be noted that unless $\Det D$ is trivial, expectation values are not canonically functions on $B$, but sections of $\Det D$. 

If, moreover, one wishes to study symmetries of the massless free fermion, one must introduce (background or dynamical) gauge fields to couple to the original theory. The differential on $\Obq_b$ will be modified by a term that imposes, in addition, Ward identities for the relevant symmetry. If one calls ``conformal blocks'' the space of all observables modulo the relations induced by Ward identities and the vanishing of total derivatives, then Theorem \ref{thm: main} states that the conformal blocks of the massless free fermion are one-dimensional, even without the imposition of any additional symmetries.
\end{remark}

\begin{remark}
\label{rem: factdet}
For each $b\in B$, $\Obq_b$ is the space of global sections of a factorization algebra on $M_b$ (see \cite{CG1}). A factorization algebra is a cosheaf-like local-to-global object on $M_b$ which satisfies a descent property for a particular class of covers. Another interpretation of Theorem \ref{thm: main}, therefore, is that we have ``factorized the determinant line'' in the sense that we have identified the line $(\det D)_b$ with $\Obq_b$ in the derived category of chain complexes  and that, moreover, $\Obq_b$ can be computed as a homotopy colimit of a diagram involving only local information on $M_b$. These identifications are smooth in $b\in B$. 
\end{remark}

\subsection{Proof of main theorem}
\label{subsec: proof}
We would like to give a quasi-isomorphism $\shObq\to \Det D$ along the lines of the previous section. However, the trick that enabled us to do this, namely the identification of $\Lambda^{top} W^\vee$ with $\Lambda^{top}(\ker A)^\vee$, no longer applies, since $\Lambda^{top} W^\vee$ is ill-defined when $W$ has infinite-rank. However, we will find that, over the open subsets $U_\lambda$ of $B$, we can reduce the problem to the finite-rank case. We will then have quasi-isomorphisms $\pi_\lambda: \shObq\to \Det D$; the last step will be to verify that those locally-defined quasi-isomorphisms are compatible with the transition functions for $\Det D$.  

Let us now flesh out this argument in more detail. 

We note first a technical lemma which will be important to us:

\begin{lemma}
Let $H_\lambda$ be the subvector-bundle of $\sV^!$ produced by the same means as $K_\lambda$, except using the Dirac operator $D^!$. Then $H_\lambda \cong K_\lambda^\vee$. 
\end{lemma}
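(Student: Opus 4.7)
The plan is to construct the isomorphism using the non-degenerate pairing $\ip: \sV^! \otimes \sV \to \underline{\CC}$ defined fiberwise by $\ip[\varphi,\psi] = \int_{M_b}\varphi(\psi)\,dVol_{g_b}$. I will show that its restriction to $H_\lambda|_b \otimes K_\lambda|_b$ is a fiberwise perfect pairing, which yields a smooth bundle isomorphism $H_\lambda \iso K_\lambda^\vee$ over $U_\lambda$.

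The key input is that $D^!$ is the formal transpose of $D$ with respect to $\ip$, i.e.~$\ip[D^!\varphi,\psi] = \ip[\varphi, D\psi]$; this is the defining property of the transposed differential operator, made rigorous by integration by parts on the closed fiber $M_b$. Consequently $(D^!)^2 = (D^2)^!$, so $D$ and $D^!$ share spectra fiberwise with naturally dual eigendecompositions. Concretely, fix $b \in U_\lambda$ and an $L^2$-orthonormal eigenbasis $\{\phi_j\} \subset \sV_b$ of $D_b^2$ with eigenvalues $\mu_j^2$; using the Hermitian metric $(\cdot,\cdot)$ on $V$, define $\phi_j^\vee \in \sV_b^!$ by $\phi_j^\vee(\psi) := (\phi_j, \psi)$, so that $\ip[\phi_i^\vee, \phi_j] = \delta_{ij}$. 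The computation
\[
\ip[(D^!)^2\phi_i^\vee, \phi_j] \;=\; \ip[\phi_i^\vee, D^2 \phi_j] \;=\; \mu_j^2\, \delta_{ij}
\]
forces $(D^!)^2 \phi_i^\vee = \mu_i^2 \phi_i^\vee$, so $\{\phi_j^\vee\}$ is an eigenbasis of $(D^!)^2$ with the same eigenvalues as $D_b^2$.

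It follows that $H_\lambda|_b = \mathrm{span}\{\phi_i^\vee : \mu_i^2 < \lambda\}$ and $K_\lambda|_b = \mathrm{span}\{\phi_j : \mu_j^2 < \lambda\}$ are biorthogonal under $\ip$, so the restricted pairing is perfect, and the associated $\C$-linear bundle map $H_\lambda \to K_\lambda^\vee$, $\varphi \mapsto \ip[\varphi, -]|_{K_\lambda}$, is a fiberwise isomorphism. Smoothness is immediate: both $H_\lambda$ and $K_\lambda$ are smooth finite-rank subbundles over $U_\lambda$ by the analogue of Proposition \ref{prop: spectralprojectors} applied to the elliptic family $D^!$, and $\ip$ is induced from the smooth bundle pairing $V \otimes V^\vee \to \underline{\CC}$ followed by fiber integration. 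Hence $H_\lambda \iso K_\lambda^\vee$ as smooth vector bundles on $U_\lambda$. The only genuine obstacle is confirming that Proposition \ref{prop: spectralprojectors} applies to $D^!$, which is straightforward since $D^!$ is again a smooth fiberwise-elliptic family whose square is self-adjoint with respect to the metric on $V^\vee$ induced by the Hermitian metric on $V$; everything else is fiberwise linear algebra.
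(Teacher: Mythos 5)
Your proof is correct and follows essentially the same route as the paper's: both establish that the metric on $V$ sends an eigenbasis of $D_b^2$ to an eigenbasis of $(D^!)_b^2$ with the same eigenvalues (so $H_\lambda$ is defined over $U_\lambda$), and both identify the isomorphism as the restriction-of-functionals map $H_\lambda \subset \sV^! \to K_\lambda^\vee$, verified fiberwise via the metric-induced correspondence. Your biorthogonality computation $\ip[\phi_i^\vee,\phi_j]=\delta_{ij}$ is just a more explicit, basis-level rendering of the paper's commutative-diagram argument composing the two anti-linear isomorphisms $(H_\lambda)_b \to (K_\lambda)_b \to (K_\lambda)_b^\vee$.
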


\begin{proof}
First let us show that $(D^!)^2_b$ and $D^2_b$ have the same eigenvalues, so that $H_\lambda$ and $K_\lambda$ are both defined over $U_\lambda$. First, note that given an eigenvector $v_\mu\in \Gamma(M_b, V_b)$ with eigenvalue $\mu$ of $D_b^2$, then the metric on $V_b$ gives a section of $V_b^\vee$ which is an eigenvector of $(D^!)_b^2$ with eigenvalue $\mu$. The converse argument also holds to give an eigenvector for $D_b^2$ given one for $(D^!)_b^2$. 

Now, over $U_\lambda$, there is a natural smooth map 
\begin{equation}
\label{eq: bundleiso}
H_\lambda \subset \sV^!\to K_\lambda^\vee;
\end{equation}

more precisely, there is a map of $\cinfty_B(U)$ modules
\[
\Gamma(U,H_\lambda) \to \Hom_{\cinfty_{B}(U)}(\Gamma(U,K_\lambda), \cinfty_B(U)),
\]
which takes a section of $H_\lambda$, views it as a section of $\sV^!$, and then restricts this to a section of $K_\lambda^\vee$. The map manifestly intertwines the $\cinfty_B$ actions and is linear over $\cinfty_B$. 

It remains only to check that this map is a fiberwise isomorphism. In the first paragraph of the proof, we constructed an (anti-linear) isomorphism 
\[
(H_\lambda)_b \to (K_\lambda)_b;
\]
moreover, there is another anti-linear isomorphism arising from the pairing $(\cdot,\cdot)$ on $V_b$
\[
(K_\lambda)_b\to (K_\lambda)_b^\vee.
\]
We claim that the composite map $(H_\lambda)_b\to(K_\lambda)_b^\vee$ is the map of Equation \ref{eq: bundleiso}. To this end, note the following commutative diagram

\[
\begin{tikzcd}
\sV^!\ar[r]& \sV \ar[r]& \sV^!\ar[d]\\
H_\lambda \ar[u]\ar[r] &K_\lambda\ar[r,"\cong"]\ar[u,hook] & K_\lambda^\vee
\end{tikzcd};
\]
the left square commutes by the discussion of the first paragraph of the proof, and the right square commutes by the construction of the bottom map in that square. The composite of the top two rightward-pointing arrows is the identity, and so the map $H_\lambda\to K_\lambda^\vee$ which arises from going up and across the top row is the map of Equation \ref{eq: bundleiso}. On the other hand, the bottom map is an isomorphism; the lemma follows.
\end{proof}

Now, as in \cite{rabaxial}, we start by studying the classical observables:

\begin{lemma}
Let $U_\lambda$ denote the open subset of $B$ consisting of the points $x$ where $\lambda>0$ is not an eigenvalue of $D^2_x$, and let 
\[
\tilde K_\lambda := K_\lambda^-\oplus H_\lambda^+
\]
Then, there is a deformation retraction
\[
\defretract{\left(\Gamma(\cdot,\Sym(\Pi(\tilde K_\lambda\oplus \tilde K_\lambda^\vee[1]))),D^+-(D^+)^!\right)}{\shObcl}{\iota_\lambda}{\pi_\lambda}{\eta_\lambda}
\]
of sheaves of $\cinfty$-modules over $U_\lambda$, where $\iota_\lambda$, $\pi_\lambda$, and $\eta_\lambda$ are induced from smooth bundle maps. Here, we make a slight abuse of notation and denote by $D^+-(D^+)^!$ the operator on $\Sym(\Pi(\tilde K_\lambda\oplus \tilde K_\lambda\vee[1]))$ which extends $D^+-(D^+)^!$ as a derivation from the linear summand of the symmetric algebra to the whole space.
\end{lemma}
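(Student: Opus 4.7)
The plan is to build the deformation retraction in two stages: first a linear retraction at the level of the underlying complex of bundles $\sS$ over $U_\lambda$, and then promote it to the symmetric algebras via the standard principle (which I expect to find in Appendix \ref{sec: app}) that $\Sym$ sends a deformation retraction of complexes to one of algebras.

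For the linear step, I would apply the spectral-cutoff machinery of Proposition \ref{prop: spectralprojectors} to both $D$ and $D^!$. On $U_\lambda$ this yields smooth orthogonal splittings $\sV^\pm = K_\lambda^\pm \oplus (K_\lambda^\pm)^\perp$ and $(\sV^\pm)^! = H_\lambda^\pm \oplus (H_\lambda^\pm)^\perp$, where the preceding lemma identifies $H_\lambda^\pm \cong (K_\lambda^\pm)^\vee$. Because eigenspaces of $D^2$ are $D$-invariant, $D^+$ and $(D^+)^!$ preserve these splittings; moreover, on the orthogonal complements the spectrum of $D^2$ is bounded below by $\lambda > 0$, so $D^+$ and $(D^+)^!$ restrict there to fiberwise isomorphisms with smooth inverses (via functional calculus, together with smoothness of $P_{[0,\lambda)}$). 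This exhibits $\sS$ as a direct sum of the finite-rank subcomplex $\Pi(\tilde K_\lambda \oplus \tilde K_\lambda^\vee[1])$, carrying the induced differential $D^+ - (D^+)^!$, and an acyclic complement contracted by $(D^+)^{-1} - ((D^+)^!)^{-1}$. The associated $\iota^{\mathrm{lin}}, \pi^{\mathrm{lin}}, \eta^{\mathrm{lin}}$ are then smooth bundle maps over $U_\lambda$.

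The second step is to invoke the appendix lemma to obtain $\iota_\lambda$ and $\pi_\lambda$ as the termwise symmetric-algebra extensions of the linear maps, and $\eta_\lambda$ as the standard combinatorial lift of $\eta^{\mathrm{lin}}$ across the $\Sym^n$ summands, built from $\eta^{\mathrm{lin}}$, $\iota^{\mathrm{lin}}\pi^{\mathrm{lin}}$, and the multiplication. Because the linear data come from smooth bundle maps, each $\Sym^n$-level map also arises from a smooth bundle map, proving the claim about smoothness of $\iota_\lambda, \pi_\lambda, \eta_\lambda$.

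The main obstacle I anticipate is analytic: one must check that these formulas define continuous operators on the completed (nuclear Fr\'echet) symmetric powers that enter the definition of $\Obcl$, and that they vary smoothly in $b \in U_\lambda$. For $\iota_\lambda$ and $\pi_\lambda$ this is automatic since $P_{[0,\lambda)}$ is a smoothing operator, hence continuous in every reasonable topology. For $\eta_\lambda$ one needs that $D^{-1}$ on the orthogonal complement acts continuously and smoothly in $b$ on the completed tensor powers; this is precisely the bookkeeping already performed in \cite{rabaxial}, whose arguments I expect to carry over essentially verbatim.
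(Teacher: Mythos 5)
Your proposal is correct and follows essentially the same route as the paper: a linear deformation retraction of $\Gamma(\cdot,\sS)$ onto the finite-rank spectral subcomplex $\Pi(\tilde K_\lambda\oplus\tilde K_\lambda^\vee[1])$, followed by the standard lift to symmetric algebras (Proposition 2.5.5 of \cite{othesis}). The only cosmetic difference is that the paper packages your fiberwise inverse of $D^+$ on the high-eigenvalue complement as $\check D G(1-\tilde P_\lambda)$, with $G$ the Green's operator of the associated Laplacian, and cites Proposition 9.20 of \cite{bgv} for the smoothness in $b$ that you defer to \cite{rabaxial}.
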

\begin{proof}
Let us denote by $\tilde D$ the operator $D^+-(D^+)^!$, and by $\check D$ the operator $D^--(D^-)^!$. Now, we construct a deformation retraction 
\begin{equation}
\label{eq: firstdefretract}
\defretract{(\Gamma(\cdot, \Pi(\tilde K_\lambda\oplus \tilde K_\lambda^\vee[1])), \tilde D)}{\Gamma(\cdot, \sS)}{\iota_\lambda}{\pi_\lambda}{\eta_\lambda};
\end{equation}
the map $\iota_\lambda$ is induced from the inclusion of $\Pi(\tilde K_\lambda \oplus \tilde K_\lambda^\vee[1])$ as a subbundle of $\sS$. The map $\pi_\lambda$ is induced from the smooth family of projections $\tilde P_\lambda:\sS\to \tilde K_\lambda\oplus \tilde K_\lambda^\vee[-1]$.  (This family is smooth by Proposition \ref{prop: spectralprojectors}.) Finally, let $(G_x)_{x\in B}$ be the family of Green's operators for the family of operators $\tilde D\check D+\check D\tilde D$. For $\eta_\lambda$ we take $ \check D G(1-\tilde P_\lambda)$, viewed as a degree --1 fiberwise endomorphism of $\sS$. Proposition 9.20 of \cite{bgv} guarantees that $\check D G$ is a smooth section of the endomorphism bundle of $\sS$, and therefore so is $ \check DG(1-\tilde P_\lambda)$. Manifestly, $\pi\circ \iota=1$. Moreover, by construction, the family $G$ satisfies $(\check D\tilde D+\tilde D\check D)G = \iota \tilde P_0- \text{id}$, so that 
\[
\tilde D (\check D G) (1-\tilde P_\lambda)+\check D G(1-\tilde P_\lambda)\tilde D = (\tilde P_0 -1)(1-\tilde P_\lambda)=\tilde P_0-1-\tilde P_0+\tilde P_\lambda=\tilde P_\lambda - 1.
\]
This verifies that we have constructed the deformation retraction in Equation \ref{eq: firstdefretract}. The deformation retraction of the lemma follows from the usual lift of a deformation retraction of complexes to their symmetric algebras, (see, for example, Proposition 2.5.5 of \cite{othesis}).  
\end{proof}

\begin{corA}
There is a deformation retraction
\[
\defretract{(\Gamma(\cdot, \Sym(\Pi(\tilde K_\lambda\oplus \tilde K_\lambda^\vee[1]))),\tilde D+\Delta)}{\shObq}{\iota'_\lambda}{\pi'_\lambda}{\eta'_\lambda}
\]
of sheaves of $\cinfty$-modules over $U_\lambda$, where $\iota'_{\lambda}$, $\pi'_\lambda$, and $\eta'_\lambda$ are induced from smooth bundle maps. Here, $\Delta$ is defined on $\Sym(\Pi(K_\lambda\oplus K_\lambda^\vee[1]))$ analogously to how it is defined on $\Obq$. 
\end{corA}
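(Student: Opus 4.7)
The plan is to apply the homological perturbation lemma (Lemma \ref{lem: hpl}) to the classical deformation retraction of the preceding lemma, treating the BV Laplacian $\Delta$ as a small perturbation of the classical differential. This is the direct infinite-dimensional analogue of how Proposition \ref{prop: quantobs} was obtained from Lemma \ref{lem: classdefretract}, and of the analogous arguments in \cite{rabaxial}, so the substance of the proof is to verify that the strategy still applies in the present sheaf-theoretic, infinite-rank setting.

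First I would check the smallness hypothesis needed for HPL. Since $\Delta$ decreases symmetric degree by two while $\iota_\lambda, \pi_\lambda,$ and $\eta_\lambda$ all preserve symmetric degree (being obtained by the standard lift of bundle maps on the linear summand to the symmetric algebra), the composite $\eta_\lambda \Delta$ is nilpotent on each fixed symmetric-degree piece, so the series required by HPL terminates term by term. The output is a new deformation retraction with the same underlying graded sheaves and with perturbed structure maps $\iota'_\lambda, \pi'_\lambda, \eta'_\lambda$, each built as a finite sum of compositions of $\iota_\lambda, \pi_\lambda, \eta_\lambda,$ and $\Delta$. Since every ingredient is induced from a smooth bundle map, so are the new structure maps.

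The remaining step is to identify the perturbed differential on the small complex. The HPL formula reads
\[
\tilde D + \pi_\lambda \Delta \iota_\lambda + \pi_\lambda \Delta \eta_\lambda \Delta \iota_\lambda + \cdots,
\]
and I would argue that $\pi_\lambda \Delta \iota_\lambda$ agrees with the analogous BV Laplacian on $\Sym(\Pi(\tilde K_\lambda \oplus \tilde K_\lambda^\vee[1]))$, call it $\Delta_{\mathrm{sm}}$, and that all higher-order terms vanish. For the first point, $\iota_\lambda$ is the multiplicative extension of the bundle inclusion $\tilde K_\lambda \oplus \tilde K_\lambda^\vee[1] \hookrightarrow \sS$, and (using the identification $H_\lambda \cong K_\lambda^\vee$ from the preceding lemma) the restriction of $\ip$ along this inclusion is precisely the pairing that defines $\Delta_{\mathrm{sm}}$; hence $\Delta \iota_\lambda = \iota_\lambda \Delta_{\mathrm{sm}}$, and composing with $\pi_\lambda$ gives the claim. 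For the higher terms, the deformation-retraction identity $\eta_\lambda \iota_\lambda = 0$ combined with $\Delta \iota_\lambda = \iota_\lambda \Delta_{\mathrm{sm}}$ forces every tail $\Delta \eta_\lambda \Delta \iota_\lambda$ (and its iterates) to vanish.

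The hardest part will be the pairing compatibility underlying $\pi_\lambda \Delta \iota_\lambda = \Delta_{\mathrm{sm}}$: one must verify that the inverse of $\ip$ on $\sS$ restricts, via $\iota_\lambda \otimes \iota_\lambda$, to the inverse of the inherited pairing on $\tilde K_\lambda \oplus \tilde K_\lambda^\vee[1]$. This requires the choice $\tilde K_\lambda = K_\lambda^- \oplus H_\lambda^+$ to be compatible with the off-diagonal pairing between $\sV^\pm$ and $(\sV^\pm)^!$, so that the restriction is non-degenerate and its inverse is the one used to define $\Delta_{\mathrm{sm}}$. Once this is checked, the sheafy bookkeeping over $U_\lambda$ and the extension from the linear summand to the full symmetric algebra follow mechanically.
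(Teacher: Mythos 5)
Your proposal is correct and follows exactly the paper's route: the paper's entire proof of Corollary A is that it ``follows from a straightforward application of the homological perturbation lemma,'' applied to the classical deformation retraction with $\Delta$ as the perturbation. Your additional verifications (local nilpotence of $\Delta\eta_\lambda$ by symmetric-degree counting, identification of the perturbed differential via $\Delta\iota_\lambda=\iota_\lambda\Delta_{\mathrm{sm}}$ and $\eta_\lambda\iota_\lambda=0$) are precisely the details the paper leaves implicit.
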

\begin{proof}
This follows from a straightforward application of the homological perturbation lemma \ref{lem: hpl}.
\end{proof}

As a result of Corollary A, we are now roughly in the situation of Section \ref{subsec: findimfamilies}; namely, we have reduced by Corollary A the computation of the cohomology of $\shObq$ to the computation of the cohomology of the observables arising from a finite-rank vector bundle $W$ as in that section. We can now prove the advertised theorem:

\begin{proof}[Proof of Theorem \ref{thm: main}]
Let us denote by $\shObq_{\lambda}$ the complex of sheaves on $B$
\[
(\Gamma(\cdot, \Sym(\Pi(\tilde K_\lambda\oplus \tilde K_\lambda^\vee[1]))),\tilde D+\Delta).
\]
Combining Corollary A and Lemma \ref{lem: famquantobs} (as well as the discussion immediately after Lemma \ref{lem: famquantobs}), we have the quasi-isomorphisms 
\[
\begin{tikzcd}
\Gamma(\cdot, \Det D)\cong\Gamma(\cdot, \Lambda^{top}\tilde K_\lambda)\ar[r,shift left = 0.5 ex,"\iota''_\lambda"] &\shObq_{\lambda}\ar[l, shift left = 0.5 ex, "\pi''_\lambda"] \ar[r,shift left = 0.5 ex,"\iota'_\lambda"]& \ar[l,shift left = 0.5 ex, "\pi'_\lambda"]\shObq
\end{tikzcd}
\]
of sheaves of $\cinfty$-modules over $U_\lambda$. We need to verify that the compositions $\pi''_\lambda\circ \pi'_\lambda$ respect the transition functions on $U_\lambda \cap U_\mu$. Then, we can define $\Phi\mid_{U_\lambda}=\pi''_\lambda\circ \pi'_\lambda$ and the first statement of the theorem will be proved. Let us assume that $\mu>\lambda$.  We need to show that 
\[(-1)^{\rk K_{(\lambda,\mu)}^+(\rk K_{(\lambda,\mu)}^+-1)/2}(id\otimes\Pflm)\circ \pi''_{\lambda}\circ \pi'_\lambda=\pi''_{\mu}\circ  \pi'_\mu.\]

Note that we can write $\tilde K_\mu = \tilde K_\lambda\oplus\tilde  K_{(\lambda,\mu)}$, and this decomposition induces a projection $\tilde P_{\mu}^\lambda: \tilde K_\mu \to \tilde K_\lambda$ and an inclusion $K_\lambda\subset K_\mu$, which we extend to cochain maps $\pi_{\mu}^\lambda: \shObcl_{\mu}\to \shObcl_{\lambda}$, $\iota^\lambda_\mu:\shObcl_\lambda\to \shObcl_\mu$. In fact, we have a deformation retraction
\[
\defretract{\shObcl_{\lambda}}{\shObcl_{\mu}}{\iota_{\mu}^{\lambda}}{\pi_\mu^{\lambda}}{\eta_{\mu}^{\lambda}},
\] 
of sheaves of $\cinfty_{U_\lambda\cap U_\mu}$-modules, where $\eta^\lambda_\mu = \pi_\mu (\eta_\mu - \eta_\lambda)\iota_\mu$ . This retraction perturbs, by the homological perturbation lemma, to a deformation retraction
\[
\defretract{\shObq_{\lambda}}{\shObq_{\mu}}{\iota_{\mu}^{'\lambda}}{\pi_\mu^{'\lambda}}{\eta_{\mu}^{'\lambda}}.
\]
Consider the following three diagrams of complexes of sheaves of $\cinfty$-modules on $U_\lambda\cap U_\mu$:

\begin{equation}
\label{eq: diag1}
\begin{tikzcd}
\Gamma(\cdot, \Det D)\cong\Gamma(\cdot, \Lambda^{top}\tilde K_\lambda)\ar[d,"\pm id\otimes\Pflm"] &\shObq_{\lambda,0}\ar[l,"p_\lambda"'] \ar[d,"e^{\tilde D_{\mu}^\lambda} i_{\mu}^\lambda"]\\
\Gamma(\cdot, \Det D)\cong\Gamma(\cdot, \Lambda^{top}\tilde K_\mu)&\shObq_{\mu,0}\ar[l,"p_\mu"'] ,
\end{tikzcd}
\end{equation}

\begin{equation}
\label{eq: diag2}
\begin{tikzcd}
 \shObq_{\lambda,0} \ar[d,"e^{\tilde D_{\mu}^\lambda} i_{\mu}^\lambda"]&\shObq_{\lambda}\ar[l,"e^{\tilde D_\lambda }"'] \ar[d,"\iota_{\mu}^{'\lambda}"']\\
\shObq_{\mu,0} &\shObq_{\mu}\ar[l,"e^{\tilde D_\mu}"']
\end{tikzcd},
\end{equation}

\begin{equation}
\label{eq: diag3}
\begin{tikzcd}
 \shObq_{\lambda}& \ar[l,"\pi'_\lambda"']\ar[ld,"\pi'_\mu"]\shObq\\
 \shObq_{\mu}\ar[u,"\pi_{\mu}^{'\lambda}"']&
\end{tikzcd}.
\end{equation}
Here, the maps $p_\mu$ and $p_\lambda$ are simply projection onto the $\Lambda^{top}$ summand; $i^\lambda_\mu$ is induced from the inclusion $\tilde K_{\lambda}\subset \tilde K_{\mu}$; and $\tilde D_{\lambda}$ (resp. $\tilde D_\mu, \tilde D^\lambda_\mu$) is the degree zero observable in $\shObq_{\lambda,0}$ (resp. $\shObq_{\mu,0},\shObq_{\mu,0}$) constructed from the restriction of $\tilde D$ to $\tilde K_\lambda$ (resp. $\tilde K_\mu$, $\tilde K_{(\lambda,\mu)})$ in the same way as in the discussion immediately preceding Lemma \ref{lem: multbypartfunc}.

The commutativity of these diagrams is shown in Lemma \ref{lem: diagrams}. They combine into a (non-commutative!) diagram of complexes of sheaves of $\cinfty$-modules on $U_\lambda\cap U_\mu$

\begin{equation}
\label{eq: bigdiagram}
\begin{tikzcd}
\Gamma(\cdot, \Det D)\cong\Gamma(\cdot, \Lambda^{top}\tilde K_\lambda)\ar[d,"\pm id\otimes \Pflm"] &\shObq_{\lambda,0}\ar[l,"p_\lambda"'] \ar[d,"e^{\tilde D_{\mu}^\lambda} i_{\mu}^\lambda"]& \shObq_{\lambda}\ar[l,"e^{\tilde D_\lambda }"'] \ar[d,"\iota_{\mu}^{'\lambda}"',shift right = 0.5ex]& \ar[l,"\pi'_\lambda"']\ar[ld,"\pi'_\mu"]\shObq\\
\Gamma(\cdot, \Det D)\cong\Gamma(\cdot, \Lambda^{top}\tilde K_\mu)&\shObq_{\mu,0}\ar[l,"p_\mu"'] &\shObq_{\mu}\ar[l,"e^{\tilde D_\mu}"'] \ar[u,"\pi_{\mu}^{'\lambda}"',shift right = 0.5ex]& 
\end{tikzcd};
\end{equation}
the diagram is not commutative on the nose, since the composition $\iota_\mu^{'\lambda}\pi_\mu^{'\lambda}$ is only the identity up to the homotopy $\eta_\mu^{'\lambda}$. However, this weaker version of commutativity will be enough to show the desired fact, namely that 
\begin{equation}
\label{eq: desiredformula}
 (-1)^{\rk K_{(\lambda,\mu)}^+(\rk K_{(\lambda,\mu)}^+-1)/2}id\otimes\Pflm \circ p_\lambda \circ e^{\tilde D_\lambda}\circ\pi'_\lambda=p_\mu \circ e^{\tilde D_\mu}\circ\pi'_\mu.
\end{equation}
(Here, we are using the fact that $\pi''_\lambda=p_\lambda e^{\tilde D_\lambda}$.) Namely, consider the following chain of equalities:
\begin{align*}
\pm\Pflm \circ p_\lambda \circ e^{\tilde D_\lambda}\circ\pi'_\lambda&= p_\mu \circ e^{\tilde D^\lambda_\mu } i^\lambda_\mu \circ e^{\tilde D_\lambda}\circ\pi'_\lambda\\
&= p_\mu \circ e^{\tilde D_\mu } \circ \iota^{'\lambda}_\mu\circ\pi'_\lambda\\
&=p_\mu \circ e^{\tilde D_\mu } \circ \iota^{'\lambda}_\mu\circ\pi^{'\lambda}_\mu \circ \pi'_\mu\\
&= p_\mu \circ e^{\tilde D_\mu}\circ\pi'_\mu- p_\mu \circ e^{\tilde D_\mu }\circ [d,\eta^{'\lambda}_\mu]\circ \pi'_\mu\\
&=p_\mu \circ e^{\tilde D_\mu}\circ\pi'_\mu- d\circ p_\mu \circ e^{\tilde D_\mu }\circ \eta^{'\lambda}_\mu\circ \pi'_\mu-p_\mu \circ e^{\tilde D_\mu }\circ \eta^{'\lambda}_\mu\circ d\circ \pi'_\mu,
\end{align*}
where $d$ is our generic term for a differential on the relevant chain complex; the second term in the last expression is a map $\shObq \to \Gamma(\cdot, \Lambda^{top}\tilde K_\mu)$ with image in the space of exact cochains in the latter space, which is precisely zero. Moreover, $\eta^{'\lambda}_\mu$ has image in strictly negative degrees in $\shObq_\mu$ and hence $p_\mu \circ e^{\tilde D_\mu }\circ \eta^{'\lambda}_\mu=0$. This proves Equation \ref{eq: desiredformula}, and hence the first statement of the theorem.

It remains to check only the claim about the determinant section. The determinant of $D$ is defined to be, in the trivializing neighborhood $U_\lambda$, $\det D^+_\lambda$. The constant function $1\in \cinfty(B)$ is a closed observable in $\shObq(B)$, and by the proof of Lemma \ref{lem: pfaffiso}, $\pi^{''}_\lambda\circ\pi'_\lambda(1)=\pf(\tilde D_\lambda)=(-1)^{\rk K_\lambda^+(\rk K_\lambda^+-1)/2}\det(D^+_\lambda)$ (up to, perhaps, a sign).   
\end{proof}

\begin{remark}
\label{rem: iotafails}
With a little bit of work, one can show that the maps $\iota'_\lambda\circ \iota''_\lambda$ do not necessarily respect the transition functions of $\Det D$; instead, we claim but do not prove that one can specify an explicit homotopy relating $\iota'_\lambda\circ \iota''_\lambda$ to $\iota'_\mu\circ \iota''_\mu$ on $U_\lambda\cap U_\mu$.
\end{remark}

\begin{lemma}
\label{lem: diagrams}
The diagrams in Equations \ref{eq: diag1}, \ref{eq: diag2}, and \ref{eq: diag3} commute.
\end{lemma}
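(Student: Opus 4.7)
The plan is to address the three diagrams separately, using throughout the orthogonal decomposition $\tilde K_\mu = \tilde K_\lambda \oplus \tilde K_{(\lambda,\mu)}$ furnished by Proposition \ref{prop: spectralprojectors} and the induced decomposition $\tilde D\mid_{\tilde K_\mu} = \tilde D_\lambda + \tilde D_\mu^\lambda$. This orthogonality yields a canonical isomorphism of the underlying graded algebras
\[
\Sym\bigl(\Pi(\tilde K_\mu\oplus \tilde K_\mu^\vee[1])\bigr)\;\cong\;\Sym\bigl(\Pi(\tilde K_\lambda\oplus \tilde K_\lambda^\vee[1])\bigr)\otimes \Sym\bigl(\Pi(\tilde K_{(\lambda,\mu)}\oplus \tilde K_{(\lambda,\mu)}^\vee[1])\bigr),
\]
compatible with $\Delta$, and a corresponding identification $\Lambda^{top}\tilde K_\mu \cong \Lambda^{top}\tilde K_\lambda \otimes \Lambda^{top}\tilde K_{(\lambda,\mu)}$. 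All three verifications will be carried out by unwinding the definitions under these identifications.

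For Diagram \ref{eq: diag1}, I would observe that $i_\mu^\lambda$ sends an observable $f\in \shObq_{\lambda,0}$ to $f\otimes 1$ under the tensor decomposition above. Multiplication by $e^{\tilde D_\mu^\lambda}$ then produces $f\otimes e^{\tilde D_\mu^\lambda}$, and the projection $p_\mu$ onto $\Lambda^{top}\tilde K_\mu$ factors as $p_\lambda \otimes p_{(\lambda,\mu)}$. On the second factor, $p_{(\lambda,\mu)}(e^{\tilde D_\mu^\lambda}) = \pf(\tilde D_\mu^\lambda)$ by the very definition of the Pfaffian, and Remark \ref{rem: pfafftodet} identifies this with $(-1)^{\rk K_{(\lambda,\mu)}^+(\rk K_{(\lambda,\mu)}^+-1)/2}\det(D_{(\lambda,\mu)}^+)$. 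This is exactly the sign and factor appearing in $\pm id\otimes \Pflm$, so the diagram commutes.

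For Diagram \ref{eq: diag2}, the key algebraic input is that $\tilde D_\lambda$ and $\tilde D_\mu^\lambda$ commute as quadratic observables in $\shObq_{\mu,0}$, since they are built from spectral modes of $\tilde D$ in orthogonal invariant subspaces. Hence $e^{\tilde D_\mu} = e^{\tilde D_\lambda}\cdot e^{\tilde D_\mu^\lambda}$. The inclusion $\iota_\mu^{'\lambda}$ is, by construction, the tensor-unit inclusion into the $\tilde K_{(\lambda,\mu)}$-factor; the commutativity is then a direct computation, noting that $e^{\tilde D_\lambda}$ is the cochain isomorphism from Lemma \ref{lem: multbypartfunc} applied in the $\tilde K_\lambda$-factor alone.

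For Diagram \ref{eq: diag3}, the linear-algebraic observation is that the spectral projections satisfy $\tilde P_\lambda = \tilde P_\mu^\lambda \circ \tilde P_\mu$, which is immediate from $\tilde K_\mu = \tilde K_\lambda\oplus \tilde K_{(\lambda,\mu)}$. Consequently the classical deformation retractions used to build $\pi'_\lambda$ and $\pi'_\mu$ factor, at the level of bundle maps, through the intermediate retraction giving $\pi_\mu^{'\lambda}$. The functoriality of the homological perturbation lemma (Lemma \ref{lem: hpl}) with respect to composable pairs of deformation retractions then transports this factorization to the perturbed differential $\tilde D + \Delta$, yielding $\pi'_\lambda = \pi_\mu^{'\lambda}\circ \pi'_\mu$ on the nose. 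I expect the main obstacle to be this step: one has to verify that the contracting homotopies combine as $\eta_\mu^\lambda = \pi_\mu(\eta_\mu-\eta_\lambda)\iota_\mu$ before perturbing, and that the associated side conditions (the annihilation properties of $\eta$) are preserved so that HPL gives a clean factorization rather than a factorization only up to a further homotopy. The sign-tracking in Diagram \ref{eq: diag1} is a secondary but genuine annoyance, since the Koszul conventions on $\Pi(\tilde K_\lambda \oplus \tilde K_\lambda^\vee[1])$ must be reconciled with the sign in the transition cocycle of $\Det D$ recorded in Remark \ref{rem: pfafftodet}.
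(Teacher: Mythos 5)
Your argument is essentially the paper's: the paper also reduces Diagram \ref{eq: diag1} to the tensor decomposition $\shObq_{\mu,0}\cong\shObq_{\lambda,0}\otimes\shObq_{(\lambda,\mu),0}$ together with Remark \ref{rem: pfafftodet} for the sign, and handles Diagram \ref{eq: diag3} by exactly the composition statement for the homological perturbation lemma that you invoke (this is Lemma \ref{lem: compofprojections}, whose hypotheses \ref{eq: cond1}--\ref{eq: cond3} are precisely the side conditions you flag as the main obstacle). The one point where you assert rather than argue is Diagram \ref{eq: diag2}: $\iota_\mu^{'\lambda}$ is the HPL-\emph{perturbed} inclusion, so it is not ``by construction'' the tensor-unit inclusion $i_\mu^\lambda$; you need the correction term $\eta_\mu^\lambda(1-\Delta\eta_\mu^\lambda)^{-1}\Delta\iota_\mu^\lambda$ to vanish, which the paper deduces from $\eta_\mu^\lambda\iota_\mu^\lambda=0$. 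With that one verification supplied, your proof coincides with the paper's.
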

\begin{proof}
For the diagram of Equation \ref{eq: diag1}, the lemma is a consequence of the commutativity of the squares
\begin{equation}
\label{eq: twodiagrams}
\begin{tikzcd}
\cinfty_B\mid_{U_\lambda\cap U_\mu}\ar[d,"\pm\Pflm\cdot"]&\ar[l,"id"'] \cinfty_B\mid_{U_\lambda\cap U_\mu}\ar[d,"e^{\tilde D^\lambda_\mu}\cdot"]\\
\Gamma(\cdot, \Lambda^{top}\tilde K_{(\lambda,\mu)})& \ar[l,"p_{(\mu,\lambda)}"']\shObq_{(\lambda,\mu),0}
\end{tikzcd},
\begin{tikzcd}
\Gamma(\cdot,\Lambda^{top}\tilde K_\lambda) \ar[d,"id"]&\ar[l,"p_\lambda"']\ar[d,"id"] \shObq_{\lambda,0}\\
\Gamma(\cdot, \Lambda^{top}\tilde K_\lambda) & \ar[l,"p_{\lambda}"']\shObq_{\lambda,0}
\end{tikzcd},
\end{equation}
where $\shObq_{(\lambda, \mu),0}$ is defined analogously to $\shObq_{\lambda,0}$ by using $\tilde K_{(\lambda,\mu)}$ instead of $\tilde K_\lambda$. The diagrams are diagrams of sheaves of $\cinfty_{U_\lambda\cap U_\mu}$-modules. The extra sign in front of $\Pflm$ accounts for the discrepancy between $\pf( \tilde D^\mu_\lambda)$ and $\det (D^\mu_\lambda)^+$ (see Remark \ref{rem: pfafftodet}). 

 Upon taking the tensor product over $\cinfty_{U_\lambda\cap U_\mu}$ of the two diagrams in equation \ref{eq: twodiagrams}, and using the evident isomorphisms over $U_\lambda\cap U_\mu$
\begin{align*}
\Gamma(\cdot, \Lambda^{top}\tilde K_\mu)&\cong \Gamma(\cdot, \Lambda^{top}\tilde K_\lambda \otimes \Lambda^{top}\tilde K_{(\lambda,\mu)})\\
\shObq_{\mu,0} &\cong \shObq_{\lambda,0}\otimes_{\cinfty_{U_\lambda\cap U_\mu}} \shObq_{(\lambda,\mu),0},
\end{align*}
one finds the commutativity of the diagram of Equation \ref{eq: diag1}.

The commutativity of the diagram of Equation \ref{eq: diag2} follows from the fact that $\iota_\mu^{'\lambda}=\iota_\mu^\lambda=i^\lambda_\mu$, which in turn follows from the fact that $\eta^\lambda_\mu \iota^\lambda_\mu=0$.

The commutativity of the diagram of Equation \ref{eq: diag3} is a consequence of Lemma \ref{lem: compofprojections}. We have constructed $\eta_\mu^\lambda$ so that $\eta_\lambda$ is the composite contracting homotopy of $\eta_\mu$ and $\eta^\lambda_\mu$, and one can directly check relations \ref{eq: cond1}, \ref{eq: cond2}, \ref{eq: cond3}.
\end{proof}

\appendix
\section{The homological perturbation lemma}
\label{sec: app}
In this appendix, we list the homological algebraic lemmas that we use for the proof of Theorem \ref{thm: main}.

\begin{definition}
Let $(V,d_V)$ and $(W,d_W)$ be two cochain complexes. \textbf{A deformation retraction of $V$ onto $W$}, denoted 
\[
\begin{tikzcd}
(W,d_W)\arrow[r,shift left = .5 ex," \iota"] & (V,d_V)\arrow[l, shift left = .5 ex," \pi"]\arrow[loop right,distance = 4em,start anchor = {[yshift = 1ex]east},end anchor = {[yshift=-1ex]east}]{}{\eta}
\end{tikzcd},
\]
is a triple of maps $(\iota, \pi, \eta)$, where $\iota$ and $\pi$ are cochain maps $W\to V$ and $V\to W$, such that $\pi\iota=id_W$ and $\eta$ is a cochain homotopy between $\iota\pi$ and $id_V$.
\end{definition}
The following is the main theorem discussed in \cite{crainic}.
\begin{lemma}[Homological Perturbation Lemma]
\label{lem: hpl}
Suppose that 

\[
\begin{tikzcd}
(W,d_W)\arrow[r,shift left = .5 ex," \iota"] & (V,d_V)\arrow[l, shift left = .5 ex," \pi"]\arrow[loop right,distance = 4em,start anchor = {[yshift = 1ex]east},end anchor = {[yshift=-1ex]east}]{}{\eta}
\end{tikzcd}
\]
is a deformation retraction, and suppose that $\delta_V$ is a degree 1 operator such that $d_V+\delta_V$ is a differential and $(1-\delta_V\eta)$ is invertible. Then 

\[
\begin{tikzcd}
(W,d_W+\delta_W)\arrow[r,shift left = .5 ex," \iota'"] & (V,d_V+\delta_V)\arrow[l, shift left = .5 ex," \pi'"]\arrow[loop right, distance = 4em,start anchor = {[yshift = 1ex]east},end anchor = {[yshift=-1ex]east}]{}{\eta'}
\end{tikzcd}
\]
is a deformation retraction, where

\begin{align*}
\delta_W &= \pi (1-\delta_V \eta)^{-1}\delta_V \iota\\
\iota' & = \iota + \eta (1-\delta_V\eta)^{-1}\delta_V\iota\\
\pi' & = \pi + \pi (1-\delta_V\eta)^{-1}\delta_V\eta\\
\eta' & = \eta+ \eta(1-\delta_V\eta)^{-1}\delta_V\eta
\end{align*}
are the perturbed data of the deformation retraction.
\end{lemma}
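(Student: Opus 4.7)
The plan is to verify directly each of the defining properties of a deformation retraction for the perturbed data $(\iota', \pi', \eta', d_W + \delta_W)$. The central algebraic tool is the formal identity $T = 1 + \delta_V\eta T = 1 + T\delta_V\eta$ for $T := (1-\delta_V\eta)^{-1}$; this realizes $T$ as the geometric series $\sum_{n\geq 0}(\delta_V\eta)^n$ and leads to useful manipulations like $T\delta_V = \delta_V + \delta_V\eta T\delta_V$. Rewritten in terms of $T$, the perturbed data take the compact form $\delta_W = \pi T\delta_V\iota$, $\iota' = \iota + \eta T\delta_V\iota$, $\pi' = \pi + \pi T\delta_V\eta$, and $\eta' = \eta + \eta T\delta_V\eta$.

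I would then carry out four direct checks. First, that $(d_W + \delta_W)^2 = 0$, which expands to $d_W\delta_W + \delta_W d_W + \delta_W^2 = 0$ and follows from the perturbation hypothesis $(d_V+\delta_V)^2 = 0$ together with the original retraction identities $d_V\iota = \iota d_W$, $\pi d_V = d_W\pi$, and $id_V - \iota\pi = d_V\eta + \eta d_V$. Second, that $\iota'$ and $\pi'$ intertwine the perturbed differentials, i.e.\ $(d_V+\delta_V)\iota' = \iota'(d_W + \delta_W)$ and $\pi'(d_V+\delta_V) = (d_W+\delta_W)\pi'$; after substituting the explicit formulas and applying the recursion for $T$, both identities telescope to the original chain-map relations. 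Third, the section property $\pi'\iota' = id_W$. Fourth, the homotopy identity $(d_V+\delta_V)\eta' + \eta'(d_V+\delta_V) = id_V - \iota'\pi'$.

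The main obstacle is combinatorial bookkeeping: each of the four checks, when expanded naively, produces many terms that cancel only after repeated use of $T = 1 + \delta_V\eta T$, and the Koszul signs must be tracked carefully throughout. The argument simplifies considerably under the ``side conditions'' $\eta\iota = 0$, $\pi\eta = 0$, and $\eta^2 = 0$, since then most intermediate products vanish outright and, for example, the verification of $\pi'\iota' = id_W$ reduces instantly to $\pi\iota = id_W$. As is standard (see \cite{crainic}), one may always arrange these side conditions by replacing $\eta$ with a suitably modified homotopy without altering $\iota$ or $\pi$ or the induced quasi-isomorphism class. I would therefore first arrange the side conditions and then perform the four verifications in order, presenting the chain-map check in full because it most transparently shows how $T$ encodes a homotopical resummation of all $\delta_V$-insertions into the original deformation retraction, and referring to \cite{crainic} for the remaining calculations.
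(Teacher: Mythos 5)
The paper does not actually prove this lemma: it states it and defers entirely to \cite{crainic}, so there is no in-paper argument to compare against. Your outline is the standard direct verification that any proof of the perturbation lemma carries out (including the one in \cite{crainic}): set $T=(1-\delta_V\eta)^{-1}$, use the recursion $T=1+\delta_V\eta T=1+T\delta_V\eta$, and check the four identities. In that sense your route coincides with the intended one, and the plan is sound, though as written it is a plan that itself defers the residual computations back to \cite{crainic}.

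One point deserves emphasis, because you present it as a convenience when it is in fact essential. With the paper's definition of deformation retraction (no side conditions), the conclusion $\pi'\iota'=\operatorname{id}_W$ can genuinely fail. For instance, take $W=V$ two-dimensional with basis $e_{-1},e_{0}$ in degrees $-1,0$, $d_V=0$, $\iota=\pi=\operatorname{id}$, $\eta(e_0)=e_{-1}$, $\eta(e_{-1})=0$, and $\delta_V(e_{-1})=\epsilon e_0$ with $0<\epsilon<1$: all hypotheses hold, yet $\pi'=\pi(1-\delta_V\eta)^{-1}$ and $\iota'=(1-\eta\delta_V)^{-1}\iota$ give $\pi'\iota'=(1-\epsilon)^{-1}\operatorname{id}\neq\operatorname{id}$. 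So the side conditions $\eta\iota=0$, $\pi\eta=0$, $\eta^2=0$ are needed for the statement as written, not merely to shorten the bookkeeping. Moreover, ``arranging'' them by replacing $\eta$ changes the homotopy and therefore changes the displayed formulas for $\delta_W,\iota',\pi',\eta'$; that route proves the existence of a perturbed retraction but not the formulas for the original $\eta$, and the explicit formulas are exactly what the paper later uses (e.g.\ in Lemma \ref{lem: compofprojections} and the proof of Theorem \ref{thm: main}). The cleaner fix, consistent with every application in this paper, is to note that the homotopies built here from orthogonal decompositions and Green's operators already satisfy the side conditions, add that hypothesis to the statement, and then run your four verifications with the original $\eta$.
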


\begin{lemma}
\label{lem: comphe}
Given two deformation retractions
\[
\begin{tikzcd}
(V_1,d_1)\arrow[r,shift left = .5 ex," \iota_1"] & (V_2,d_2)\arrow[l, shift left = .5 ex," \pi_1"]\arrow[loop right, distance = 4em,start anchor = {[yshift = 1ex]east},end anchor = {[yshift=-1ex]east}]{}{\eta_1}
\end{tikzcd}
\]

and

\[
\begin{tikzcd}
(V_2,d_2)\arrow[r,shift left = .5 ex," \iota_2"] & (V_3,d_3)\arrow[l, shift left = .5 ex," \pi_2"]\arrow[loop right, distance = 4em,start anchor = {[yshift = 1ex]east},end anchor = {[yshift=-1ex]east}]{}{\eta_2}
\end{tikzcd},
\]

the composite 

\[
\begin{tikzcd}
(V_1,d_1)\arrow[r,shift left = .5 ex," \iota_2 \iota_1"] & (V_3,d_3)\arrow[l, shift left = .5 ex," \pi_1 \pi_2"]\arrow[loop right, distance = 4em,start anchor = {[yshift = 1ex]east},end anchor = {[yshift=-1ex]east}]{}{\eta_2+\iota_2\eta_1\pi_2}
\end{tikzcd}
\]

is also a deformation retraction.
\end{lemma}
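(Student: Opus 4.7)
The plan is to verify directly that the composite data $(\iota_2\iota_1,\;\pi_1\pi_2,\;\eta_2 + \iota_2\eta_1\pi_2)$ satisfies the three conditions in the definition of a deformation retraction. Two of the three conditions are essentially formal, so the work will all be in checking the homotopy identity.

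First, $\iota_2\iota_1$ and $\pi_1\pi_2$ are cochain maps because each of $\iota_1, \pi_1, \iota_2, \pi_2$ is, and cochain maps compose. Second, the section-retraction identity follows from an immediate cancellation:
\[
(\pi_1\pi_2)(\iota_2\iota_1) \;=\; \pi_1(\pi_2\iota_2)\iota_1 \;=\; \pi_1\,\mathrm{id}_{V_2}\,\iota_1 \;=\; \mathrm{id}_{V_1}.
\]

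The substance of the argument is the homotopy identity. Writing $[d,\eta] := d\eta + \eta d$, I would compute $[d_3,\;\eta_2 + \iota_2\eta_1\pi_2]$ by splitting it into two pieces. The first piece is $[d_3,\eta_2] = \mathrm{id}_{V_3} - \iota_2\pi_2$ by hypothesis. For the second piece, since $\iota_2$ and $\pi_2$ are cochain maps, $d_3$ commutes through them and one gets
\[
[d_3,\;\iota_2\eta_1\pi_2] \;=\; \iota_2\,[d_2,\eta_1]\,\pi_2 \;=\; \iota_2(\mathrm{id}_{V_2} - \iota_1\pi_1)\pi_2 \;=\; \iota_2\pi_2 - \iota_2\iota_1\pi_1\pi_2.
\]
Adding the two contributions, the $\iota_2\pi_2$ terms cancel, leaving
\[
\bigl[d_3,\; \eta_2 + \iota_2\eta_1\pi_2\bigr] \;=\; \mathrm{id}_{V_3} - \iota_2\iota_1\pi_1\pi_2,
\]
which is precisely the required homotopy identity.

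No step here is a serious obstacle; the only thing to be careful about is invoking the fact that $\iota_2,\pi_2$ are genuine chain maps (not merely up to homotopy), which is what allows $d_3$ to be moved past them and turned into $d_2$ acting on $\eta_1$. Combining the three verifications gives the claimed deformation retraction.
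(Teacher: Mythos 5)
Your verification is correct and is exactly the "direct computation" that the paper leaves to the reader: the retraction identity follows by cancellation, and the homotopy identity follows by splitting $[d_3,\eta_2+\iota_2\eta_1\pi_2]$ and using that $\iota_2,\pi_2$ intertwine the differentials so that $[d_3,\iota_2\eta_1\pi_2]=\iota_2[d_2,\eta_1]\pi_2$. Nothing further is needed.
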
  
\begin{proof}
Direct computation. 
\end{proof}

\begin{lemma}
\label{lem: compofprojections}
Suppose given two deformation retractions
\[
\defretract{(V_1,d_1)}{(V_2,d_2)}{\iota_1}{\pi_1}{\eta_1},
\]
\[
\defretract{(V_2,d_2)}{(V_3,d_3)}{\iota_2}{\pi_2}{\eta_2},
\]
and let $\Delta_2$ and $\Delta_3$ be two ``small'' perturbations (i.e. perturbations satisfying the hypotheses of the homological perturbation lemma). Suppose further that 
\begin{align}
\label{eq: cond1}
\eta_2\Delta_3\iota_2&=0\\
\label{eq: cond2}
\eta_1\Delta_2\iota_1&=0\\
\label{eq: cond3}
\Delta_2 =\pi_2&\Delta_3 \iota_2.
\end{align}
Finally, let 
\begin{align*}
\pi_3&:= \pi_1\pi_2\\
\iota_3&:=\iota_2\iota_1\\
\eta_3&:=\eta_2+\iota_2\eta_1\pi_2
\end{align*}
be the data of the composite retraction 
\[
\defretract{(V_1,d_1)}{(V_3,d_3)}{\iota_3}{\pi_3}{\eta_3},
\]
and $\delta_{3\to 1}$ (resp. $\delta_{2\to 1}$) the differential induced by $\Delta_3$ (resp. $\Delta_2$) on $V_1$ using the retraction $(\iota_1,\pi_1,\eta_1)$ (resp. $(\iota_3,\pi_3,\eta_3)$). Then, $\delta_2=\Delta_2$ (where $\delta_2$ is the differential on $V_2$ induced from $\Delta_3$ via the retraction of $V_3$ onto $V_2$), $\delta_{3\to 1}=\delta_{2\to 1}$, and $\pi'_3=\pi'_1\pi'_2$. 
\end{lemma}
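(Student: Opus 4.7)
The plan is to verify the three claims by direct computation with the HPL formula, exploiting each of the conditions (\ref{eq: cond1})--(\ref{eq: cond3}) in turn. The common starting point is that for a single-step retraction $(\iota,\pi,\eta)$ and a small perturbation $\Delta$, the HPL-induced perturbation on the smaller complex is $\delta = \pi(1-\Delta\eta)^{-1}\Delta\iota$, and the perturbed projection collapses to $\pi' = \pi + \pi(1-\Delta\eta)^{-1}\Delta\eta = \pi(1-\Delta\eta)^{-1}$.

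For $\delta_2 = \Delta_2$, I would expand $\delta_2 = \sum_{k\geq 0}\pi_2(\Delta_3\eta_2)^k\Delta_3\iota_2$ as a geometric series. Condition (\ref{eq: cond1}) forces every $k\geq 1$ term to vanish, since it contains the substring $\eta_2\Delta_3\iota_2$. The remaining $k=0$ term is $\pi_2\Delta_3\iota_2$, which equals $\Delta_2$ by (\ref{eq: cond3}).

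For $\delta_{3\to 1} = \delta_{2\to 1}$, a telescoping induction using
\[
\Delta_3\eta_3\Delta_3\iota_2 = \Delta_3\eta_2\Delta_3\iota_2 + \Delta_3\iota_2\eta_1(\pi_2\Delta_3\iota_2) = \Delta_3\iota_2\eta_1\Delta_2
\]
(the first summand vanishing by (\ref{eq: cond1}) and the second simplifying by (\ref{eq: cond3})) shows that $(\Delta_3\eta_3)^k\Delta_3\iota_2 = \Delta_3\iota_2(\eta_1\Delta_2)^k$ for every $k\geq 0$. Summing yields $(1-\Delta_3\eta_3)^{-1}\Delta_3\iota_2 = \Delta_3\iota_2(1-\eta_1\Delta_2)^{-1}$, so that $\delta_{3\to 1} = \pi_1\Delta_2(1-\eta_1\Delta_2)^{-1}\iota_1$, which collapses to $\pi_1\Delta_2\iota_1$ by (\ref{eq: cond2}). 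The same formula arises for $\delta_{2\to 1}$ after invoking the algebraic identity $(1-\Delta_2\eta_1)^{-1}\Delta_2 = \Delta_2(1-\eta_1\Delta_2)^{-1}$ and reapplying (\ref{eq: cond2}).

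For $\pi'_3 = \pi'_1\pi'_2$, write $\Delta_3\eta_3 = X+Y$ with $X:=\Delta_3\eta_2$ and $Y:=\Delta_3\iota_2\eta_1\pi_2$. The same reasoning used above gives $(1-X)^{-1}\Delta_3\iota_2 = \Delta_3\iota_2$, hence $(1-X)^{-1}Y = \Delta_3\iota_2\eta_1\pi_2$ and $\pi_2(1-X)^{-1}Y = \Delta_2\eta_1\pi_2$ by (\ref{eq: cond3}). The resolvent identity $(1-X-Y)^{-1} = (1-X)^{-1} + (1-X)^{-1}Y(1-X-Y)^{-1}$ then produces
\[
\pi_2(1-\Delta_3\eta_3)^{-1} = \pi_2(1-\Delta_3\eta_2)^{-1} + \Delta_2\eta_1\,\pi_2(1-\Delta_3\eta_3)^{-1}.
\]
Solving for $\pi_2(1-\Delta_3\eta_3)^{-1}$ yields $(1-\Delta_2\eta_1)^{-1}\pi_2(1-\Delta_3\eta_2)^{-1}$, and premultiplying by $\pi_1$ gives the desired $\pi'_3 = \pi'_1\pi'_2$. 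The main obstacle is purely bookkeeping: one must keep track of which of the three conditions enables each simplification and maintain the geometric-series rearrangements in the correct order, but nothing conceptually deeper than HPL is needed, consistent with the ``direct computation'' expected of such a lemma.
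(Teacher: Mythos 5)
Your proof is correct and takes essentially the same route as the paper's: both expand the HPL geometric series for the induced differentials and perturbed projections, then use conditions (\ref{eq: cond1})--(\ref{eq: cond3}) to collapse all higher-order terms. The only difference is bookkeeping---where you use a telescoping induction and the resolvent identity, the paper reorders the words in $(\Delta_3\eta_3)^n$ into the normal form $(\Delta_3\iota_2\eta_1\pi_2)^n(\Delta_3\eta_2)^m$ after observing that any word containing the substring $\eta_2\Delta_3\iota_2$ vanishes---and the two computations are interchangeable.
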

\begin{remark}
The above lemma says that, for the purposes of computing the perturbed differential on $V_1$ and the perturbed projection $V_3\to V_1$, one can either perturb the data to $V_2$ and then perturb to $V_1$, or perturb directly to $V_1$ and the results will be the same. 
\end{remark}
\begin{proof}
\[
\delta_2 = \pi_2 \sum_{n=0}^\infty \left(\Delta_3\eta_2\right)^n\Delta_3\iota_2 = \pi_2\Delta_3\iota_2=\Delta_2,
\]
where the first equality is the formula given by the homological perturbation lemma, the second equality is given by assumption \ref{eq: cond1}, and the last equality is given by assumption \ref{eq: cond3}. Similarly, 
\[
\delta_{2\to 1} =\pi_1 \sum_{n=0}^\infty \left(\Delta_2\eta_1\right)^n\Delta_2\iota_1 =\pi_1\Delta_2\iota_1,
\]
and 
\[
\delta_{3\to 1} = \pi_1\pi_2\sum_{n=0}^\infty \left( \Delta_3 (\eta_2+\iota_2\eta_1\pi_2)\right)^n \Delta_3 \iota_2\iota_1=\pi_1\pi_2\Delta_3\iota_2\iota_1=\pi_1\Delta_2\iota_1,
\]
which proves the second stated claim. Now, 
\begin{align*}
\pi_3'&=\pi_1\pi_2\sum_{n=0}^\infty (\Delta_3(\eta_2+\iota_2\eta_1\pi_2))^n=\pi_1\pi_2\sum_{n,m=0}^\infty (\Delta_3\iota_2\eta_1\pi_2)^n(\Delta_3\eta_2)^m\\
&=\pi_1\sum_{n=0}^\infty (\pi_2\Delta_3\iota_2\eta_1)^n \pi_2\sum_{m=0}^\infty (\Delta_3\eta_2)^m=\pi_1\sum_{n=0}^\infty (\Delta_2\eta_1)^n \pi_2\sum_{m=0}^\infty (\Delta_2\eta_2)^m=\pi_1'\pi_2',
\end{align*}
which is the last stated claim.
\end{proof}
\bibliography{references}
\bibliographystyle{alpha}

\end{document}